\newtheorem{example}{Example}
\newtheorem{definition}{Definition}
\newtheorem{observation}{Observation}
\newtheorem{theorem}{Theorem}
\newtheorem{lemma}{Lemma}
\newtheorem{proposition}{Proposition}
\newcommand{\G}{{\ensuremath {\mathcal G}}}
\def\hyph{\nobreakdash-\hspace{0pt}\relax}
\newlength{\boxwidth}
\DeclareRobustCommand{\qed}{%
  \ifmmode %
  \else \leavevmode\unskip\penalty9999 \hbox{}\nobreak\hfill
  \fi
  \quad\hbox{\qedsymbol}}
\newcommand{\openbox}{\leavevmode
  \hbox to.77778em{%
  \hfil\vrule \vbox to.675em{\hrule width.6em\vfil\hrule}%
  \vrule\hfil}}
\newcommand{\qedsymbol}{\openbox}
\newenvironment{proof}[1][\proofname]{\par \normalfont
  \topsep6\p@\@plus6\p@ \trivlist
  \item[\hskip\labelsep\bfseries\itshape #1.]\ignorespaces }{%
  \qed\endtrivlist }
\newcommand{\proofname}{Proof}
\newcommand{\score}{s} 
\newcommand{\prob}{p}  
\newcommand{\hide}[1]{}
\begin{document}

\title{Ranking Games that have Competitiveness-based Strategies\thanks{This work was
supported by EPSRC grant EP/G069239/1 ``Efficient Decentralised
Approaches in Algorithmic Game Theory.'' A  preliminary version of this paper appeared as~\cite{ggkv-ec}.}}

\author{Leslie Ann Goldberg$^1$, Paul W. Goldberg$^1$, Piotr Krysta$^1$, Carmine Ventre$^2$\\ \\ 
$^1$ Dept.\ of Computer Science \\
University of Liverpool\\
Ashton Street, Liverpool L69 3BX, U.\,K.\\ \\
$^2$ School of Computing \\
Tesside University \\
Borough Road, Middlesbrough, TS1 3BA, U.\,K.
}

\date{}

\maketitle

\begin{abstract}
An extensive literature in economics and social science addresses {\em contests}, in which
players compete to outperform each other on some measurable criterion, often referred
to as a player's {\em score}, or {\em output}. Players incur costs that are an increasing function of score,
but receive {\em prizes} for obtaining higher score than their competitors.
In this paper we study finite games that are discretized contests, and the problems of
computing exact and approximate Nash equilibria. Our motivation is the worst-case hardness
of Nash equilibrium computation, and the resulting interest in important classes of games that
admit polynomial-time algorithms.
For games that have a tie-breaking rule for players' scores, we present a
polynomial-time algorithm for computing an exact equilibrium in the
2-player case, and for multiple players, a characterization of Nash equilibria that shows
an interesting parallel between these games and unrestricted $2$-player
games in normal form. When ties are allowed, via a reduction from these
games to a subclass of anonymous games, we give approximation schemes for two special cases:
constant-sized set of strategies, and constant number of players.
\end{abstract}

\section{Introduction}

We consider a class of finite games, and the problem of computing their exact and approximate
Nash equilibria. In these games, each pure strategy of a player corresponds to a {\em score},
i.e., a level of attainment (on some measurable criterion).
Each score also has an associate cost of effort (which we assume here is player-specific),
where a higher scores requires higher effort.
If all players select strategies, then their payoffs are obtained as follows.
The players are ranked according to the scores\footnote{We are following Siegel~\cite{Siegel}
in using ``score'' to refer to this quantity; it is also called ``output'' in labor-market
contests~(\cite{LR} and subsequent papers), or the ``bid''~\cite{MS}.}
they selected,
and prizes are awarded to players according to their position in the ranking.
The overall payoff to a player is the value of the prize he wins, minus the
cost of effort for the strategy he chose.
Consequently the players face a trade-off between the cost of a strategy, and its effectiveness
at winning prizes. We call these games \emph{competitiveness-based ranking games}.

By way of illustration, consider a set of athletes who are training for a race. Each athlete
spends time and effort in training, and has an increasing function that maps this upfront cost
to performance (speed in the race). His total utility is the value of the prize that he wins,
minus the cost of making this initial effort. Note that the prize is awarded based on his speed
relative to other competitors, with no consideration given to his speed taken in isolation.

\subsection{Motivation}

For unrestricted normal-form games, the PPAD-completeness results of~\cite{DGP,CDT}
suggest that a Nash equilibrium is hard to compute in the worst case.
Faced with a worst-case hardness result, there are two general routes to computationally
positive results. We may move to the problem of computing a weaker solution concept,
such as {\em approximate} Nash equilibrium (defined in detail in Section~\ref{sec:defs})
where for some $\epsilon>0$ a player's incentive to change strategy is at most $\epsilon$.
For this direction, progress has been rather limited.\footnote{For polynomial-time algorithms
the lowest $\epsilon$ that has been achieved for {\em bimatrix} games is just over $\frac{1}{3}$~\cite{TS08},
and for {\em well-supported} approximate equilibria (for which there is a further
constraint that no positive probability may be allocated to any pure strategy that is
worse than the best response by more than $\epsilon$) it is just below $\frac{2}{3}$~\cite{FGSS12}.
For normal-form games with more than 2 players, known results are even weaker~\cite{BGR,HRS}.
These $\epsilon$-values assume all payoffs lie in the range $[0,1]$, so that achieving $\epsilon=1$
is trivial, while $\epsilon=0$ corresponds to an exact equilibrium.}
The main open problem of this line research is to determine the existence of a
{\em polynomial-time approximation scheme} (PTAS); it is known from~\cite{CDT} that
a {\em fully} polynomial-time approximation scheme (FPTAS) is as hard as computing an
equilibrium exactly.
The alternative route to positive results is to note that the PPAD-hardness of computing
a Nash equilibrium relies on a class of highly-artificial games, suggesting that we instead focus
on special cases that represent important and well-motivated games.
When we find that (exact or approximate) Nash equilibria can indeed be efficiently
computed for such a class of games, this overcomes the complexity-theoretic
objection to using Nash equilibrium as a solution concept.
Various classes of potential games, games on congestion networks, and games on graphs (representing
social networks, or networks with local interactivity) do indeed have efficiently computable
Nash equilibria, see~\cite{AGTbook} for an overview.
The contests for prizes that we study here have a rich economics literature, but we believe
that this is the first paper to analyze the problem of {\em computing} their Nash equilibria.
While we find that certain algorithms for anonymous and polymatrix games are
applicable to special cases, we also exhibit novel
polynomial-time algorithms and approximation schemes for games in this class.

The kind of games that we study here are often constructed by a competition organizer
(for example, sporting contests, or reward structures in organizations).
Hence there is an associated mechanism design
problem of structuring the contest in such a way as to elicit competitive behavior from
the participants. Efficient computation of Nash equilibria for these games should help
with the problem of determining, for example, a good allocation of prizes in such a contest.
We noted that the strategies available to players have associated levels of effort and
score, so in a given Nash equilibrium it is straightforward to compute, for example,
the total effort or score of the players. In this way we have a well-defined measure
of goodness of a Nash equilibrium.

Notice that for the players, it is socially optimal to use strategies having minimal effort,
since all prizes must be allocated, irrespective of what the players have achieved
(it is relative, not absolute, achievement that gets rewarded).
However, in a Nash equilibrium, players will typically be more than minimally
competitive. Competitiveness amongst the players  results
in a positive externality --- in the context of spectator sports, the
spectators prefer to see a well-run race, or in the context of research contests~\cite{CG03}
(such as the DARPA Grand Challenge), competitiveness leads to research progress.
We continue by reviewing the background literature in more detail.

\subsection{Related work}\label{sec:related}

As we noted, there is an extensive literature on {\em contests} where
players exert effort with the aim of outperforming their rivals.
The new aspect of this work is our focus on the design of
{\em polynomial-time} algorithms for computing outcomes (equilibria) of these games.
Most previous work considers continuous games, where players have a continuum
of actions to choose from, with real-valued functions from effort
to score. Here we study a discretized version so that we have finite games.
These finite games can still serve as approximations to the continuous ones,
and the resulting discretized functions benefit from a natural representation
(as a list of pairs of values associating effort cost with score, for each player)
so that we have a clear notion of ``input
size'' of a problem instance, as needed in the context of polynomial-time
algorithms.

The most closely-related class of finite games in the literature, appears
to be the {\em ranking games} of Brandt et al.~\cite{BFHS}.
In such a game, the outcome of any pure-strategy profile is a rank-ordering
of the players, and a player's utility is a decreasing function of his position
in the rank-ordering. However, the games studied in~\cite{BFHS} allow an arbitrary dependence of
rankings on the pure-strategy profiles that may cause them. There is no
requirement that certain pure strategies are more likely than others to
raise a player's ranking. This leads to computational hardness results, notably that
unrestricted ranking games are hard, even for just 3 players~\cite{BFHS}.

{\em Anonymous games} represent a class of finite games that relate to the
discretized contests considered here. Anonymous games are games where a player's
payoff depends on his own action and on the distribution of actions taken by the
other players, but not on the identities of the players who chose each action.
Anonymous games admit polynomial-time approximation schemes
(PTAS's)~\cite{Dask,DP} but may be PPAD-complete to solve exactly.
The algorithms for anonymous games can be applied to an interesting subclass of the
discretized contests that we study here.
In particular they apply to a special case in which all players have
the same (finite) set of score levels available to them,
with prizes being shared in the event of ties (which is a standard assumption in much of
the literature).

We next mention some of the more classical literature on continuous contests.
An influential line of work is the literature on rent-seeking problems,
initiated by Tullock~\cite{T}. These are problems in which players compete
to receive favorable treatment from a regulator. Another large body of
literature initiated by Lazear and Rosen~\cite{LR} has focused on contests in labor markets.
Lazear and Rosen study the merits of rank-based prizes (as an alternative
to paying a piece rate) as a means of incentivizing effort by workers in organizations.
These models incorporate a random noise process that affects the selection of the winner.
In a Tullock contest, the probability of winning is the amount of effort
exerted by a player, divided by the total effort.
Besides being a model of artificial competition, games of this kind are a model
for competition for status within society~\cite{HK}.

Closer to the setting of this paper, are contests where the outcome
is a deterministic function of effort spent by the players.
Siegel~\cite{Siegel} studies properties of the Nash equilibria of (continuous) contests,
in a setting where all prizes have the same value, so that a contestant either wins or loses.
He analyses expected payoffs to players,
and also {\em participation} (which refers to the decision by a player to make more
than a minimal expected effort).
An important special case is first-price all-pay auctions, where an item (the prize) is sold
to the bidder who makes the highest offer, but all bidders must pay, even if they lose.
In an all-pay auction, there is a sense in which (cost of) effort is the same thing as score.
Where players have different valuations for the prize, Baye et al.~\cite{BKdV} show that
this can be interpreted as score being some linear function of effort, with scaling factor
proportional to the valuation.
For the case of two players, Hillman and Riley~\cite{HR} study the cumulative
probability distributions of effort choices in the unique Nash equilibrium of
the game; the uniqueness of the equilibrium is proved in~\cite{BKdV}.
The latter paper also extends the analysis to the case of first-price all-pay
auctions with many players. A survey of results and models considered in contests
(without noise) is contained in~\cite{K09}. In general, our games can be viewed
as a discretized version of contests without noise and could be used as an
approximation to these games, provided the discretization is fine enough.

As we noted, this model has a corresponding mechanism design problem of allocating
values to a set of prizes in a contest (so as to maximize total effort or score by the
players). A well-known paper of Moldovanu and Sela~\cite{MS}
analyses this question in a setting where the players' functions (from effort
to score) differ from each other in being linearly scaled by each player's ability.
They obtain results for allocating value to prizes, given prior distributions
over the players' abilities. (\cite{MS} also begins with an informative and
readable discussion and motivation for the study of contests.)
Moldovanu et al.~\cite{MSS07} study how to elicit maximal effort via the selection of rank-based
{\em status classes}, where a player gains utility from having others assigned to lower classes,
and disutility from others in higher classes. (In~\cite{MSS07} each player has a privately-known
ability, relating score to cost of effort, which has been generated by a probability distribution.
The solution concept is a shared function mapping ability to effort.)
Their model relates to our result for {\em linear-prize games} (Section~\ref{sec:linear}),
in which the value of taking $k$-th place in the ranking is linear (decreasing) as
a function of $k$; that essentially corresponds to $d$ status classes each of size 1, $d$ denoting the number of players.
Note that the setting we study here is different, in that players have arbitrary
discretized functions from effort to score, and the functions are all commonly known.
An alternative approach to maximizing effort~\cite{MS06} considers how to
divide the competitors into sub-contests whose winners then compete in a final round.

Szymanski~\cite{Szymanski} considers the application of the theory of contests
to the design of sporting competitions.
The allocation of prizes in dynamic sport contests in which players determine
their efforts at different stages of the game (e.g., at the beginning of each
half in a soccer game) is considered in~\cite{CCH}. There, the focus is on
comparing rank-based versus score-based prizes when spectators care about
contestants' efforts or about the ``suspense'' of the game (see also~\cite{MS,CCH}).
Cohen et al.~\cite{Cohen} study a version of the contest design problem where
the prize fund may be chosen by the designer, who wants to maximize the effort
elicited, with prizes representing the price paid for effort.
A related line of research~\cite{Haugen, BC} has addressed from a
game-theoretic perspective, the impact of the point scoring system on
offensive versus defensive play in the UK Premier League; our concern
here is slightly different, being focused on highly competitive versus
weakly competitive play.


\subsection{Our Contribution}

Some of our algorithms apply specifically to an interesting
special case of {\em games without ties} where players cannot
share a prize as a result of obtaining the same score.\footnote{As an example of this consider
the competition amongst universities for places in a ranking, or league table. In such a ranking,
it is necessary to list the names of institutions in a strict order; ties have to be broken somehow,
and this is often done alphabetically by name of institution.}
For games without ties, we give an efficient algorithm for the 2-player case, and for the
multi-player case we show how to compute the probabilities in a Nash equilibrium if the
supports of the players' distributions are known (Theorem~\ref{thm:supportpoly}).
(This shows an interesting parallel with general 2-player bimatrix games, where a Nash equilibrium
having known support can be efficiently computed.)
When ties are possible ---in the literature, the
standard assumption is that prizes are shared--- it is convenient to reduce these games to
equivalent {\em score-symmetric} games (Definition~\ref{def:ret-symm}) in which all
players have the same available set of score values, but with player-dependent costs.
The reduction incurs an increase in the number of strategies that is proportional to the
number of players. For these games, a special case of interest arises when we assume
a constant limit on the number of these strategies, and the results of ~\cite{Dask,DP} can
be used to provide a PTAS; here we give a simpler PTAS for such games.
Table~\ref{tbl:algoresults} gives our algorithmic results for both classes of
games, together with {\em linear-prize games} (Section~\ref{sec:linear}), a class of ranking
games in which the prize for taking position $k$ in the ranking is a linear function of $k$.

\begin{table*}[t]
\centering
\begin{tabular}{p{80pt}|c|c|c|c|}
\cline{2-5}
& \# players & \# prizes & \# actions & \rule{0ex}{15pt} Result\\[5pt]
\hline
 \multicolumn{1}{|c|}{\multirow{3}{80pt}{\centering Score-symmetric games}}  &   any \rule{0ex}{12pt} & any  & O(1) & PTAS (Thm. \ref{thm:ptasfixedstrat})\\
\cline{2-5}
 \multicolumn{1}{|c|}{} & O(1)\rule{0ex}{12pt} & O(1) & any & FPTAS (Thm. \ref{thm:fptas2})\\
\cline{2-5}
 \multicolumn{1}{|c|}{} & any \rule{0ex}{12pt} & any  & 2 & Exact Pure (Thm. \ref{thm:retsym:pne})\\
\hline
\multicolumn{1}{|c|}{\centering Games without ties} & 2\rule{0ex}{12pt} & 2 & any & Exact (Thm.~\ref{thm:2player-no-ties})\\
\hline
\multicolumn{1}{|c|}{\centering Linear-prize games} & any\rule{0ex}{12pt} & \# players & any & Exact (Thm.~\ref{thm:linear}) \\
\hline
\end{tabular}
\caption{Our algorithmic contributions.} \label{tbl:algoresults}
\end{table*}

\section{Model, notation and some illustrative examples}

We work in a classical game-theoretic setting of a finite number of players, each with
a finite number of actions, and we consider the problem of computing Nash equilibria, and
approximate Nash equilibria, for these games.

In Section~\ref{sec:notation} we specify in detail the class of games that we study,
and introduce some notation and terminology. Section~\ref{sec:defs}
gives the background definitions of Nash and approximate Nash
equilibrium. Section~\ref{sec:example} shows some examples to illustrate
various technical issues.

\subsection{Definition and Notation}\label{sec:notation}

A {\em prize} refers to the reward that a player gains from
obtaining a specified position in the ranking, and this relates directly to the standard
usage of ``first prize'', ``second prize'' etc in competitions.
We say that an action is ``stronger'' or ``more competitive'' than another
one, if its score is higher. Any pair of actions are comparable
in this sense, whether or not they belong to the same player.

We next formally define the class of games we introduce and study in this paper.
Throughout, we let $d$ denote the number of players in a game, and $n$ the number
of strategies available to each player.

\begin{definition}
In a competitiveness-based ranking game, the $j$-th pure strategy of player $i$,
denoted $a^i_j$, has associated a {cost} $c^i_j$ and a {score} $\score^i_j$.
We assume they are indexed in increasing order of competitiveness so that,
for all $i$, $j$, we have $c^i_j < c^i_{j+1}$ and $\score^i_j < \score^i_{j+1}$.
Any pure-strategy profile results in a ranking of the players according to the scores.
A player whose position in the ranking is $k$ gets awarded the $k$-th prize, having value $u_k$.
Prizes are non-increasing with respect to ranks: $u_k \geq u_{k+1}$, for $1\leq k < d$,
with the assumption that $u_1 > u_d$.\footnote{This assumption simply rules out an
uninteresting case in which competition is not adequately incentivised.
Indeed, when $u_1=u_d$, the profile in which all players play their least
competitive action is a dominant strategy equilibrium.} In the event of a tie (where two or
more players obtain the same score and are ranked equal) the prizes
that would result from tie-breaking are shared.
The total payoff to a player will be the value of the prize he is awarded,
minus the cost of the action selected by that player.
\end{definition}

For the purpose of designing algorithms that search for a Nash equilibrium, we can assume
without loss of generality that we have \emph{strict} monotonicity of effort costs and scores,
in the definition above. If two different actions
(i) have the same cost and different scores then the stronger dominates the weaker and
(ii) have the same score and different costs then the cheaper will dominate the more expensive.

Next we define an interesting subclass of the games we consider.

\begin{definition}\label{def:ret-symm}
A \emph{score-symmetric} game is a competitiveness-based ranking game in which all players have the
same set of pure strategies (which we denote $a_1,\ldots,a_n$) having the same scores
(which we denote $\score_1,\ldots,\score_n)$.
Costs remain player-specific, and $c^i_j$ denotes the cost to player $i$ of pure strategy $a_j$.
\end{definition}

\subsection{Exact and approximate Nash equilibria}\label{sec:defs}

Here we give the definitions of Nash equilibrium and approximate Nash equilibrium,
also some further notation we use throughout.
Let $S_i$ be the set of player $i$'s pure strategies; $S_i=\{a^i_j\}_j$.
Let $S=S_1\times\ldots\times S_d$ be the set of pure-strategy profiles,
where recall $d$ denotes the number of players. It is convenient to define
$S_{-i}=S_1\times\ldots\times S_{i-1} \times S_{i+1} \times\ldots\times S_d$
as the set of pure-strategy profiles of all players but $i$.

A {\em mixed strategy} for player $i$ is a distribution on $S_i$, that is,
real numbers $x^i_j\geq0$ for each strategy $a^i_j\in S_i$ such that $\sum_{j\in S_i}x^i_{j}=1$.
A set of $d$ mixed strategies (one for each player)
is a {\em mixed strategy profile}. By $u^i_s$ we denote the utility to player $i$ in
strategy profile $s$. A mixed strategy profile $\{x^i_j\}_{j\in S_i}, i=1,\ldots,d$,
is called a {\em (mixed) Nash equilibrium}\ if, for each $i$, $\sum_{s\in S} u^i_{s} x_s$ is maximized
over all mixed strategies of $i$ ---where for a strategy profile $s=(s_1,\ldots,s_d)\in S$,
we denote by $x_s$ the product $x^1_{s_1}\cdot x^2_{s_2} \cdots x^d_{s_d}$.
(The notation $x_s$ naturally extends to strategy profiles $s \in S_{-i}$.) That is, a Nash
equilibrium is a set of mixed strategies from which no player has a
incentive to unilaterally deviate.  It is well-known (see, e.g.,~\cite{OR}) that the following is
an equivalent condition for a set of mixed strategies to be a Nash equilibrium:
\begin{align}\sum_{s\in S_{-i}} u^i_{js} x_s > \sum_{s\in S_{-i}}
u^i_{j's} x_s
  \Longrightarrow x^i_{j'}=0.\label{eq:NEconstraints}\end{align}
The summation $\sum_{s\in S_{-i}} u^i_{js} x_s$ in the above equation is
the expected utility of player $i$ if $i$ plays pure strategy $j$
and the other players use the mixed strategies $\{x^{i'}_k\}_{k\in S_{i'}}, i'
\neq i$.  Nash's theorem~\cite{N} asserts that {\em every game has a
Nash equilibrium}.

We say that a set of mixed strategies $x$ is an {\em
$\epsilon$-approximately well supported Nash equilibrium}, or {\em
$\epsilon$-Nash equilibrium} for short,  if, for each $i$, the following
holds:
\begin{align}
\sum_{s\in S_{-i}} u^i_{js} x_s > \sum_{s\in S_{-i}} u^i_{j's} x_s
+\epsilon
  \Longrightarrow x^i_{j'}=0. \label{eq:epsilonNEconstraints}
\end{align}
Condition (\ref{eq:epsilonNEconstraints}) relaxes that in
(\ref{eq:NEconstraints}) by allowing a strategy to have positive
probability in the presence of another strategy whose expected
payoff is better by at most $\epsilon$.

\subsection{Some examples}\label{sec:example}

We consider some examples that should be helpful in understanding the
model and issues arising. Example~\ref{ex:2player} shows that the games we
consider do not always have {\em pure} Nash equilibria.

\begin{example}\label{ex:2player}
Consider two players; for $i=1,2$ player $i$ has two actions $a^i_1$ and $a^i_2$.
Suppose the row player (player 1)
is stronger than the column player in the sense that the column player
only wins by playing $a^2_2$ while the row player plays $a^1_1$ --- this
can be achieved by setting $\score^2_1=2$, $\score^1_1=3$, $\score^2_2=4$, $\score^1_2=5$.
Suppose the costs are $c^i_1=0$, $c^i_2=\frac{1}{2}$ for both players $i=1,2$, and
we have a single prize worth 1, i.e., $u_1=1$ and $u_2=0$. We have payoff matrix:
\[
\begin{array}{r|cc}
          & a^2_1  &  a^2_2  \\  \hline
  a^1_1   & (1,0)  &  (0,\frac{1}{2})  \\
  a^1_2   & (\frac{1}{2},0)  &  (\frac{1}{2},-\frac{1}{2})
\end{array}
\]
It is easily checked that this game has no pure Nash equilibrium and that the unique
equilibrium is the one in which both players mix uniformly.
\end{example}

Example~\ref{ex:manyplayers} is an anonymous game with binary actions
(studied in~\cite{Blonski}, although~\cite{Blonski} studies a continuum of players).
The example shows that in this kind of game,
there may be multiple equilibria, and the number of equilibria may be exponential in
the number of players.

\begin{example}\label{ex:manyplayers}
Consider a symmetric game with an even number $d\geq 4$ of players; a single
prize worth 1 unit; each player $i$ has two actions $a_1$ and $a_2$
with costs $c_1=0$ and $c_2=c$ (we do not have a superscript to identify
a player, since the games are symmetric). The prize will be shared
between players who use $a_2$, or all players if they all use $a_1$.

Notice first that for $c\in \left(0,\frac{1}{d}\right]$, there is a pure equilibrium in which all players play $a_2$.
For $c\in \left(\frac{1}{d},1\right)$ there is, by symmetry, a fully-mixed
Nash equilibrium where all players play $a_2$ with the same probability.
This can be seen by the following argument. Suppose each player, other than the first,
plays $a_2$ with probability $p$. The first player has an incentive to play $a_2$ which is
decreasing in $p$. In particular, for $p=1$ player $1$ has no incentive to play $a_2$,
while for $p=0$ player $1$ has an incentive to play $a_2$. Then, by continuity, there exists a
value of $p$, say $p^*$, for which player $1$ is indifferent between $a_1$ and $a_2$.
In a profile in which all players play $a_2$ with probability $p^*$ all players are indifferent by symmetry.

Now put $c=\frac{2}{d}-\epsilon$, where $\epsilon < \frac{2}{d^2+2d}$.
We claim that there are also {\em pure} Nash
equilibria where any subset of size $\frac{d}{2}$ play pure $a_2$ and
the others play pure $a_1$. A player playing $a_2$ obtains utility
$-c+\frac{2}{d}>0$; no incentive to switch to $a_1$. A player playing
$a_1$ obtains utility 0, and by switching to $a_2$ would obtain utility
$-c+1/(\frac{d}{2}+1) = \epsilon-\frac{2}{d} + \frac{2}{d+2} < 0$.
Indeed there are also many mixed equilibria where a subset of the
players play pure $a_1$ and the other players all use the same
probabilities.

Observe that there are no Nash equilibria where players may mix with
different probabilities --- two such players would both be indifferent
between $a_1$ and $a_2$, but their expected payoffs from playing $a_2$
would have to differ.
\end{example}

The following example shows that there is no bound on the
price of anarchy and on the price of stability in these games.

\begin{example}
Consider a symmetric game with 2 players; a single prize worth 1 unit;
each player $i$ has two actions $a_1$ and $a_2$ with scores $\score_1 < \score_2$
and costs $c_1=0$ and $c_2=1/2 - \epsilon$,
for some small $\epsilon > 0$. The payoff for playing $(a_2,a_2)$ is $\epsilon$ for both players,
and it is higher than the payoffs obtained by deviating from the strategy $a_2$: player $1$ has
a payoff of $0$ for strategy profile $(a_1,a_2)$ and so does player $2$ for strategy profile $(a_2,a_1)$.
Thus, strategy profile $(a_2,a_2)$ is a pure Nash equilibrium and its social welfare is $2\epsilon$.
Now, notice that $a_2$ is actually a strictly dominant strategy for both players thus implying that
no other action profile is a Nash equilibrium. The action profile that
maximizes the social welfare is $(a_1,a_1)$, and its social welfare is $1$.
Thus, the price of anarchy in this game is $1/(2\epsilon)$.
Since this Nash equilibrium is unique, $1/(2\epsilon)$ is also the price of stability of this game.
Because $\epsilon$ can be chosen arbitrarily small, both price of anarchy and price of stability are
unbounded. Note that this game is essentially the Prisoner's Dilemma in which $a_1$ is the collaborating strategy and $a_2$ is the defecting one.
\end{example}

\section{Algorithms and proofs}

We start by noting some preprocessing steps that establish some useful
assumptions that we can make without loss of generality. We continue in
Section~\ref{sec:noties} by considering separately the special case
where players cannot tie for a position in the ranking; this case would
arise in competitions that have a tie-breaking rule, or where the score values
$\score^i_j$ are all distinct.
The reason for a focus on the tie-free case is that the analysis is
simpler and the Nash equilibria turn out to have a special structure.
Section~\ref{sec:linear} applies a result of~\cite{DPpolym} for poly matrix games, to the special case where prize values decrease linearly as a function of rank position.

In Section~\ref{sec:ret-symm} we study the more general case where
players may tie for a position in the ranking. We show that we can
focus without loss of generality on an anonymous subclass of these
games. Pure Nash equilibria of these games are studied in Section
\ref{sec:retsym:pne}. In Section
\ref{sec:ptas:fixedstrategies} we give a polynomial-time approximation
scheme for games with a fixed number of strategies.
Finally, in Section~\ref{sec:fptas} we give a fully polynomial-time
approximation scheme for the case of constantly-many players.

\subsection{Preprocessing}\label{sec:preprocessing}

Results about the computation of $\epsilon$-approximate equilibria require
us to assume that all payoffs in a given game lie in some fixed bounded range;
usually the interval $[0,1]$ is assumed. Games whose values lie outside this
this range can be have their payoffs resealed into $[0,1]$ without affected the
strategic aspects of the game. With this in mind, we resale the payoffs of
an arbitrary competitiveness-based ranking game as follows.
We may assume that the number of prizes is equal to~$d$, the number of players.
This is without loss of generality --- if there are more prizes than players, then
only $d$ of them can be awarded (so all but the first $d$ of them can be discarded
without changing the game). Also, if there are fewer prizes than players,
we can just add additional prizes of value~$0$.
(Thus, when we say that we have a game with only $j$ prizes, what we really
mean is that $u_{j+1}= \cdots = u_d = 0$.)
The first step in the preprocessing is to ensure that $u_d=0$.
This can be done, without changing the strategic aspect of the game, by subtracting $u_d$ from all prizes.
Next, we ensure that $u_1=1$. This can be done, without changing
the strategic aspect of the game, by dividing all prizes, and all costs, by $u_1$.
Next, for each player~$i$, we ensure that $c_1^i=0$.
This can be done, without changing the strategic aspect of the game, by subtracting $c_1^i$
from all of the costs of player~$i$.
Finally, we may assume that no player has an action with a cost greater than $1$,
since such an action would be dominated by $a^i_1$.

Note also that the numerical values of the scores $\score^i_j$ may be
modified without affecting the payoffs and Nash equilibria of the game,
provided only that the modification does not affect which are greater
than which (in which case the ranking of the players is preserved).
However, it is usually convenient to specify numerical $\score^i_j$ values
(rather than, more abstractly, their ordinal relationships) when describing a game.

Finally, we establish a useful fact that will be used to obtain
polynomial-time algorithms that return approximate Nash equilibria.

\begin{observation}\label{obs:roundprob}
For any $\epsilon>0$ inverse of an integer, given a probability vector
$\mathbf{x}=(x_1,\ldots,x_n)$, it is possible to define a probability vector
$\tilde{\bf x}=(\tilde x_1,\ldots,\tilde x_n)$, called an ``$\epsilon$-rounding''
of $\bf x$, in which
\begin{enumerate}
\item each entry is equal to a non-negative integer multiple of $\epsilon$, and
\item For every $j\in \{1,\ldots,n\}$, the rounding error
$\left|\sum_{k=1}^j (\tilde{x}_k - x_k) \right|$ is less than~$\epsilon$.
\end{enumerate}
\end{observation}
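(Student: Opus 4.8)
The plan is to construct $\tilde{\bf x}$ directly from the partial sums of $\bf x$ by rounding each cumulative sum to the nearest multiple of $\epsilon$, and then recovering the individual entries by differencing. Concretely, write $P_j = \sum_{k=1}^j x_k$ for $j \in \{0,1,\ldots,n\}$ (so $P_0 = 0$ and $P_n = 1$), and define $\tilde{P}_j$ to be a multiple of $\epsilon$ that is within $\epsilon$ of $P_j$; the natural choice is $\tilde{P}_j = \epsilon \cdot \lfloor P_j/\epsilon \rfloor$ (or the nearest multiple of $\epsilon$), with the endpoints pinned as $\tilde{P}_0 = 0$ and $\tilde{P}_n = 1$, which is consistent since $P_0 = 0$ and $P_n = 1$ are already multiples of $\epsilon$ (using that $\epsilon$ is the inverse of an integer). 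Then set $\tilde{x}_j = \tilde{P}_j - \tilde{P}_{j-1}$.

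First I would check that this $\tilde{\bf x}$ is a genuine probability vector. Each $\tilde{x}_j$ is a difference of two multiples of $\epsilon$, hence an integer multiple of $\epsilon$; and $\sum_{j=1}^n \tilde{x}_j = \tilde{P}_n - \tilde{P}_0 = 1 - 0 = 1$ telescopes correctly. The one thing that needs care is non-negativity: I need $\tilde{P}_j \geq \tilde{P}_{j-1}$. This follows if the rounding is monotone, i.e. $\tilde{P}_{j-1} \le \tilde{P}_j$ whenever $P_{j-1} \le P_j$, which holds for the floor-based rounding since $P_{j-1} \le P_j$ implies $\lfloor P_{j-1}/\epsilon\rfloor \le \lfloor P_j/\epsilon\rfloor$. (If one instead rounds to the nearest multiple of $\epsilon$, monotonicity still holds, so either convention works; floor is cleanest.) I would also note the boundary cases $j=1$ and $j=n$ use the pinned values and cause no trouble.

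Next I would verify the two claimed properties. Property 1 is immediate from the construction, as just noted. For property 2, observe that $\sum_{k=1}^j (\tilde{x}_k - x_k) = \tilde{P}_j - P_j$ (again telescoping, using $\tilde{P}_0 = P_0 = 0$), and $|\tilde{P}_j - P_j| < \epsilon$ by the choice of $\tilde{P}_j$ as a multiple of $\epsilon$ within $\epsilon$ of $P_j$ — indeed the floor gives $0 \le P_j - \tilde{P}_j < \epsilon$, which is even a one-sided bound, so strictness is clear.

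Honestly, there is no real obstacle here: the only subtlety worth flagging is the non-negativity of the $\tilde{x}_j$, which is exactly why we round the \emph{partial sums} monotonically rather than rounding the individual $x_j$ independently (the latter could produce negative entries or fail to sum to $1$). I would present the proof in this order — define $P_j$ and $\tilde{P}_j$, check $\tilde{\bf x}$ is a probability vector via telescoping and monotonicity, then read off both properties from the identity $\sum_{k\le j}(\tilde{x}_k - x_k) = \tilde{P}_j - P_j$ — and keep it to a short paragraph, since each step is a one-line verification.
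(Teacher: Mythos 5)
Your proof is correct, but it takes a genuinely different route from the paper's. The paper rounds the entries $x_1,\ldots,x_n$ one at a time, adaptively choosing to round $x_j$ up or down to the nearest multiple of $\epsilon$ according to whether the running rounded sum is currently below or above the true partial sum; this keeps the invariant $\left|\sum_{k\le j}(\tilde x_k-x_k)\right|<\epsilon$, and then a separate little argument (the total $\sum_k\tilde x_k$ is a multiple of $\epsilon$ within $\epsilon$ of $1$, hence equals $1$) is needed to confirm that $\tilde{\bf x}$ is a probability vector. You instead round the cumulative sums $P_j$ down to multiples of $\epsilon$ (with $\tilde P_0=0$, $\tilde P_n=1$ automatically consistent since $1/\epsilon$ is an integer) and recover $\tilde x_j=\tilde P_j-\tilde P_{j-1}$; then non-negativity follows from monotonicity of the floor, the sum-to-one follows by telescoping with no extra argument, and the error bound is the one-sided $0\le P_j-\tilde P_j<\epsilon$. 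The two constructions produce different roundings in general (e.g.\ for $\epsilon=\tfrac12$ and ${\bf x}=(0.3,0.3,0.4)$ the paper's procedure yields $(0.5,0,0.5)$ while yours yields $(0,0.5,0.5)$), but both satisfy the observation. What your version buys is that every claimed property is an immediate consequence of a single identity, $\sum_{k\le j}(\tilde x_k-x_k)=\tilde P_j-P_j$, whereas the paper's sequential rounding is the more ``online'' formulation; either is acceptable here since the observation is only used as a tool elsewhere.
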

\begin{proof}
We consider the values~$x_1,\ldots,x_n$ in order.
When we consider~$x_j$, we round it to define $\tilde x_j$.
If $x_j$ is an integer multiple of $\epsilon$, then $\tilde x_j=x_j$.
Otherwise, we set the value of $\tilde x_j$ by rounding $x_j$ ---
we round up to the nearest integer multiple of~$\epsilon$
if $\sum_{k<j} \tilde x_k \leq \sum_{k<j} x_k$ and
we round down to the nearest integer multiple of~$\epsilon$ otherwise.
This ensures that $\left|\sum_{k=1}^j (\tilde{x}_k - x_k) \right|< \epsilon$.

We now show that $\tilde{\bf x}$
is a probability vector, that is, that $\sum_{k=1}^n \tilde{x}_k = 1$.
Now since $\bf x$ is a probability vector,
$$\left|
\left(\sum_{k=1}^n \tilde{x}_k\right)-1
\right| =
\left|\sum_{k=1}^n (\tilde{x}_k - x_k) \right|,$$
and we already know that the latter is less than~$\epsilon$.
Since the $\tilde{x}_k$'s are non-negative integer multiples of $\epsilon$ then
$\sum_k \tilde{x}_k = a \epsilon$, for some integer $a$.
The above inequality then yields $-1 < a - \frac1\epsilon < 1$.
Since $\frac1\epsilon$ and $a$ are integer numbers, we can only satisfy
the previous inequality by having $\sum_k \tilde{x}_k=1$.
\end{proof}

\subsection{Games without ties with a single prize}\label{sec:noties}

We begin by observing that (subject to the above preprocessing) we may restrict our
attention to actions' costs that are \emph{strictly} less than $1$.

\begin{observation}
Assume player $i$ has an action $a^i_j$ such that $c^i_j=1$. Then $a^i_j$
is \emph{weakly} dominated by $a^i_1$. Therefore, we can eliminate
$a^i_j$ from the game at the price of eliminating some potential Nash
equilibria.
\end{observation}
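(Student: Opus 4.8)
The plan is to derive the statement directly from the preprocessing normalizations, by comparing the payoff of $a^i_j$ with that of $a^i_1$ against every possible play of the other players. Recall that after preprocessing we have $u_1=1$ and, since we are in the single-prize case, $u_2=\cdots=u_d=0$, and also $c^i_1=0$; moreover we are in the tie-free regime, so in any pure-strategy profile the ranking is a strict total order and a player collects $u_1=1$ if ranked first and $0$ otherwise. In particular this forces $j\neq 1$, so $a^i_1$ and $a^i_j$ are genuinely distinct actions.

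First I would fix an arbitrary pure profile $s_{-i}\in S_{-i}$ of the remaining players and bound both payoffs from the same side of $0$. Whatever rank $a^i_j$ induces for player $i$ against $s_{-i}$, his payoff is at most $u_1-c^i_j=1-1=0$; and whatever rank $a^i_1$ induces for player $i$ against $s_{-i}$, his payoff is at least $0-c^i_1=0$. Hence the payoff from $a^i_1$ is never smaller than the payoff from $a^i_j$, for every pure opponent profile, and therefore (by linearity, taking expectations) for every mixed profile of the other players as well. This is precisely the assertion that $a^i_1$ weakly dominates $a^i_j$. I would also remark that in degenerate instances the two can even be payoff-equivalent for player $i$ --- e.g.\ when player $i$'s weakest action $a^i_1$ can never reach first place while $a^i_j$ can never be displaced from it, which is consistent with the score ordering $\score^i_1<\score^i_j$ --- which is why the domination is only weak and why the reduction is stated with a caveat.

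Finally, for the ``at the price of eliminating some potential Nash equilibria'' clause I would invoke the standard behaviour of weak domination. Deleting $a^i_j$ cannot create new equilibria: in any equilibrium of the reduced game, player $i$'s payoff is already at least that of $a^i_1$, hence at least $0$, hence at least what he would obtain by switching to the deleted action $a^i_j$, so the profile remains a Nash equilibrium of the original game. Conversely, a Nash equilibrium of the original game that assigns positive probability to $a^i_j$ need not survive; note that in any such equilibrium $a^i_j$ must be a best response, which by the inequality above pins player $i$'s equilibrium payoff to exactly $0$. There is essentially no technical obstacle here; the only points needing a little care are to spell out precisely in what sense equilibria are lost, and to observe that the action-by-action comparison is valid against pure opponent profiles and hence (by linearity) against mixed ones.
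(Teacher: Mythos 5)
Your argument is correct, and since the paper states this observation without proof, your spelled-out reasoning (payoff from $a^i_j$ is at most $u_1-c^i_j=0$ against any opponent profile, payoff from $a^i_1$ is at least $-c^i_1=0$, hence weak domination, with deletion possibly losing equilibria that put positive probability on $a^i_j$ but never creating new ones) is exactly the intended justification. Your side remarks on possible payoff-equivalence and on why only equilibria using $a^i_j$ can be lost are consistent with the paper's usage of ``weakly dominated'' and add nothing that conflicts with it.
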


Assuming that costs are strictly less than $1$, we show that Nash
equilibria of games without ties have a nice structure when there is a
single prize. Siegel~\cite{Siegel} has shown a more general version of the
following, in the context of continuous games. We include a proof here,
since it is simpler in the discrete case.

\begin{theorem}\label{thm:onepositive}
Suppose there is a single prize of value 1 and actions' costs are less than $1$.
If no two actions have the same strength (thus ties are impossible) then in any Nash equilibrium
\begin{enumerate}
\item There is just one player with positive expected payoff; all others have expected payoff zero.
\item The player with positive expected payoff is the one with the strongest action with a cost of less than 1.
\end{enumerate}
\end{theorem}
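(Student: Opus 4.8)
The plan is to argue about the structure of any Nash equilibrium by exploiting the single-prize, tie-free setup together with the fact that every player always has access to the zero-cost action $a^i_1$. First I would record the baseline observation that every player can guarantee expected payoff $\geq 0$ by playing $a^i_1$ (cost $0$, prize $\geq 0$), so in any Nash equilibrium every player has expected payoff $\geq 0$; the content is to show that \emph{at most} one player gets a strictly positive payoff, and to identify which one. For part~1, suppose two distinct players $i$ and $i'$ both have strictly positive expected payoff. Since there is a single prize worth $1$, a player's expected payoff is $\Pr[\text{player wins}] - \mathbb{E}[\text{cost}]$, so positive payoff forces a positive probability of winning, hence a positive probability of being ranked first, for both $i$ and $i'$. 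Because ties are impossible, ``$i$ wins'' and ``$i'$ wins'' are disjoint events, and in particular whenever $i$ plays its currently-used strongest action it still sometimes loses to $i'$. I would then derive a contradiction by a deviation argument: the player who is ``less secure'' can profitably raise its score. Concretely, let $a^i_j$ be the strongest action in $i$'s support. Either $a^i_j$ beats every action in $i'$'s support, or it does not. If it does, then against $i'$'s mixed strategy the action $a^i_j$ wins with the same probability regardless of $i'$, yet $i$ still assigns positive probability to weaker actions that sometimes lose to $i'$ — and since $c^i_j < 1 = u_1$, shifting all of $i$'s mass onto $a^i_j$ strictly increases $i$'s payoff unless those weaker actions never actually lose, which would make $a^i_j$ redundant. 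Iterating, in equilibrium the strongest action in each positive-payoff player's support must be defeated with positive probability by the \emph{other} positive-payoff player, and symmetrically. Comparing the two strongest support-actions of $i$ and $i'$ then gives a direct contradiction, because one of them is stronger than the other and hence beats it with probability $1$ whenever both are played.

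The main obstacle, and the step I expect to take the most care, is making the deviation argument fully rigorous when the supports interleave: I need to show that a player with positive payoff always has an incentive to move mass upward along its own support, and to handle the interaction between the two players' supports cleanly. The cleanest route is probably: among all players with positive expected payoff, let $i^\star$ be one whose strongest support-action $a^{i^\star}_{j^\star}$ has the highest score (ties in this score broken arbitrarily). That action beats the strongest support-action of every other positive-payoff player, so it wins with probability $1$ against all of them simultaneously; but then $i^\star$ playing $a^{i^\star}_{j^\star}$ with probability $1$ would cause every \emph{other} positive-payoff player to win with probability $0$ against $i^\star$, hence (single prize, payoff $=$ win-probability minus cost $\leq$ win-probability) to have nonpositive payoff — a contradiction with ``positive payoff'' unless $i^\star$ is the \emph{only} such player. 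Wait: I must be careful that $i^\star$ is actually \emph{playing} $a^{i^\star}_{j^\star}$ with positive probability, not full probability; so the argument is that in the true equilibrium the other positive-payoff players win only on the event that $i^\star$ plays something weaker than $a^{i^\star}_{j^\star}$, and I combine this with an indifference/best-response condition on $i^\star$ to force weaker actions out of $i^\star$'s support, reaching the same contradiction. This indifference bookkeeping is the fiddly part.

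For part~2, once part~1 is established, let $i_0$ be the player owning the globally strongest action whose cost is $< 1$; call it $a^{i_0}_{j_0}$. I would show that $i_0$ has positive expected payoff in every equilibrium, which by part~1 forces $i_0$ to be \emph{the} player with positive payoff. Indeed, by the best-response condition $i_0$'s equilibrium payoff is at least the payoff $i_0$ would get by deviating to $a^{i_0}_{j_0}$. Against the other players' equilibrium strategies, $a^{i_0}_{j_0}$ is at least as strong as every other action in the game that anyone uses with positive probability — any stronger action must have cost $\geq 1$, and by the preprocessing such actions are dominated by $a^i_1$ and hence unused in equilibrium. Therefore $a^{i_0}_{j_0}$ wins with probability $1$ (ties impossible), yielding payoff $u_1 - c^{i_0}_{j_0} = 1 - c^{i_0}_{j_0} > 0$. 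Hence $i_0$'s equilibrium payoff is strictly positive, so by part~1 all other players have payoff $0$ and $i_0$ is the unique player with positive payoff. The only subtlety here is confirming that no player uses an action of cost $\geq 1$ in equilibrium, which is exactly the Observation immediately preceding the theorem (weak domination by $a^i_1$ eliminates such actions, and for the purpose of characterizing equilibria we argue directly that putting positive mass on a cost-$\geq 1$ action can never beat the guaranteed $0$ from $a^i_1$ since the prize is worth only $1$).
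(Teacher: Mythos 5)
Your part~2 is essentially the paper's argument and is fine, but part~1 --- the heart of the theorem --- has a genuine gap. Both of your routes hinge on claims about the \emph{strongest} action in a positive-payoff player's support that are either false or left unproven. The assertion that ``shifting all of $i$'s mass onto $a^i_j$ strictly increases $i$'s payoff unless those weaker actions never actually lose'' conflicts with the equilibrium indifference you later invoke: every action in the support earns exactly the equilibrium payoff, and a weaker action may lose with positive probability yet remain a best response simply because it is cheaper. The paper's Example~\ref{ex:2player} exhibits exactly this: in the unique equilibrium the positive-payoff (row) player puts probability $\frac{1}{2}$ on its weak action, which loses half the time. Likewise, ``whenever $i$ plays its currently-used strongest action it still sometimes loses to $i'$'' does not follow from $i'$ winning with positive probability --- $i'$ may win only on the event that $i$ plays something weaker. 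Finally, your ``cleanest route'' stops at precisely the step that carries the theorem's content: you propose to ``force weaker actions out of $i^\star$'s support'' by indifference bookkeeping, but no such argument is supplied, and in general it cannot be one about the equilibrium itself, since (as in Example~\ref{ex:2player}) the positive-payoff player typically does keep weaker actions in its support.

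The missing idea --- and the paper's proof --- is to look at the \emph{weakest} action in each player's support rather than the strongest. If a player's equilibrium payoff is strictly positive, then every action in its support, in particular its weakest one $w$, earns that positive payoff and hence wins with positive probability; since ties are impossible, this forces every other player to place positive probability on actions strictly weaker than $w$, i.e.\ every other player's weakest support action is strictly weaker than $w$. Two players cannot both have this property, so at most one player has positive payoff, and all others get exactly $0$ because the zero-cost action guarantees a nonnegative payoff. (The paper states this dually: the player $p$ whose weakest support action is strongest makes every other player's weakest support action win with probability $0$, so those players' equilibrium payoffs are at most $0$, hence equal to $0$.) With part~1 in place, your part~2 matches the paper: under the hypothesis that all costs are below $1$, the owner of the globally strongest action secures payoff $1-c>0$ by deviating to it, so it is the unique positive-payoff player; the digression about cost-$\geq 1$ actions is unnecessary under the stated hypothesis (and note that weak dominance alone would not imply such actions are unused in equilibrium).
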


\begin{proof}
Given the preprocessing steps noted above and the assumption that the single prize has value $1$,
the costs of all actions lie in the range $[0,1)$, and each player has an action with cost 0.
Let ${\cal N}$ be a Nash equilibrium. For each player $i$ let $w_i$ be the
weakest action of $i$ that lies in the support of ${\cal N}$; thus, $i$
has positive probability of using $w_i$, and all other actions that $i$
uses with positive probability are stronger than $w_i$.

For any action $a$, let $\score(a)$ denote the score of that action.
Let $p$ be the player whose
weakest action in the support of ${\cal N}$ is stronger than all other players' weakest
actions in the support of ${\cal N}$, thus $\score(w_p) > \score(w_{p'})$ for all $p'\not= p$.

Note that for any player $p'\not= p$, the expected payoff to $p'$ from using
action $w_{p'}$ is non-positive: $w_{p'}$ cannot win since $p$ is certain to
play a stronger action. But, $p'$ gives positive probability to $w_{p'}$,
so no other action available to $p'$ can have higher expected payoff.
$p'$ has non-positive expected payoff, and under the assumption (that
we may adopt from preprocessing) that players
all have a 0-cost action, $p'$'s expected payoff must in fact be zero.

For the second part of the theorem, note that we have seen that all but one player must have
expected payoff 0, in Nash equilibrium ${\cal N}$.
Let $p''$ be the player with the strongest action having a cost of less than 1.
Then $p''$ can always guarantee a positive expected payoff by using that action, so in any
Nash equilibrium, must receive positive expected payoff. Hence $p''$ is the only player with
non-zero expected payoff in ${\cal N}$ (and $p''$ is in fact the same as the above player $p$).
\end{proof}

The following theorem also shows how a Nash equilibrium may be efficiently computed
for tie-free games, provided that we know the support of the Nash equilibrium. This shows
an interesting parallel between these games, and general 2-player normal
form games, especially in conjunction with the subsequent observation that
the solution is a rational number.

\begin{theorem}\label{thm:supportpoly}
For games with any number of players, pure-strategy costs less than $1$ and a single prize
where ties are impossible, a Nash equilibrium can be computed in polynomial time
if we are given the support of a solution.
\end{theorem}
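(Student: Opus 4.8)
The plan is to set up a system of linear equations whose solution gives the equilibrium probabilities, given the support. Fix a Nash equilibrium ${\cal N}$ with known support $T_i \subseteq S_i$ for each player $i$. By Theorem~\ref{thm:onepositive}, exactly one player---call her $p$, the player with the strongest cost-$<1$ action---has positive expected payoff, and every other player $p'\neq p$ has expected payoff exactly zero. This gives us a handle on each non-$p$ player: for every action $a^{p'}_j$ in $p'$'s support, the expected payoff of playing $a^{p'}_j$ against the other players' mixed strategies equals $0$, and this expected payoff is just (probability that $a^{p'}_j$ wins the prize) $\cdot\, 1 - c^{p'}_j$. The ``probability of winning'' is a multilinear expression in the other players' probabilities (the probability that every other player plays an action weaker than $a^{p'}_j$), so the indifference/zero-payoff conditions are polynomial, not linear, in general---and that is the main obstacle.

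First I would exploit the single-prize, no-tie structure to linearize. Since ties are impossible, all scores across all players can be sorted into one global strict order. For a fixed action $a$, the event ``$a$ wins'' is the event ``every other player plays an action with score below $\score(a)$.'' The key observation is that we should not work with the raw probabilities $x^i_j$ but with the \emph{cumulative} probabilities: for player $i$ and a score threshold $t$, let $F_i(t) = \Pr[\text{player }i\text{ plays an action with score} < t]$. Then the probability that action $a$ of player $p'$ wins is $\prod_{i\neq p'} F_i(\score(a))$, a product of one cumulative value per opponent. The zero-payoff equations for players $p'\neq p$ then read $c^{p'}_j = \prod_{i\neq p'} F_i(\score^{p'}_j)$ for each $a^{p'}_j$ in $T_{p'}$. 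Taking logarithms turns each such equation into a linear equation in the unknowns $\log F_i(t)$ over the (polynomially many) relevant thresholds $t$.

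Second, I would argue this log-linear system, together with the consistency constraints that the $F_i$ arise from an actual probability distribution supported on $T_i$ (these too are linear in the raw $x^i_j$, or we recover the $x^i_j$ from consecutive differences of the $F_i$ values at the relevant thresholds) and the player-$p$ best-response conditions, has a solution that can be found in polynomial time: compute each $F_i(t)$ from the log-linear equations, back out the $x^i_j$ as differences of consecutive cumulative values, and verify the support is consistent. One must handle the player $p$ separately---$p$'s actions are not all indifferent, but $p$'s equilibrium conditions (each supported action is a best response, weakly dominating $p$'s other actions) are linear once the opponents' cumulative distributions are pinned down, since the win-probability of each of $p$'s actions is again a product over opponents of known $F_i$ values. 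The number of thresholds, hence unknowns and equations, is $O(nd)$, so Gaussian elimination runs in polynomial time. Finally, since the log-linear system has rational coefficients (the $\log c^{p'}_j$ terms: here one should instead keep the multiplicative system and observe that after substituting the $F_i$ as variables the equations $c^{p'}_j=\prod_{i\neq p'}F_i(\cdot)$ can be solved by a triangular/elimination argument ordering thresholds from smallest to largest, yielding each $F_i$ value as a rational function of costs), the solution is rational, establishing the parallel with bimatrix games noted in the statement. The one subtlety I expect to need care is showing the elimination order is well-defined---i.e., that as we sweep thresholds upward, each newly-encountered cumulative value is determined by already-solved ones---which follows because at the $\ell$-th smallest relevant score, introduced by some player's action, that action's win-probability involves only opponents' cumulative masses at strictly smaller thresholds, already solved.
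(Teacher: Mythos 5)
Your proposal assembles the same ingredients as the paper's argument (Theorem~\ref{thm:onepositive}, the product form of win probabilities over opponents' cumulative masses, a one-variable-at-a-time elimination, rationality as a by-product), but the step you flag as the ``one subtlety'' is exactly where it breaks: the claim that sweeping the thresholds from \emph{smallest to largest} is a well-defined elimination, because ``each newly-encountered cumulative value is determined by already-solved ones,'' is false once $d\geq 3$. Order the players by the scores of their weakest supported actions; as in the proof of Theorem~\ref{thm:onepositive}, the player whose weakest supported action is highest is the unique positive-payoff player $p$. Every supported action lying strictly below $p$'s weakest supported action has win probability $0$ (the opponent $p$ never plays below it), so its payoff equation degenerates to the consistency condition that its cost is $0$ and determines \emph{no} probability; in fact each non-$p$ player contributes exactly one such bottom action (two would force two distinct cost-$0$ actions for one player). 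Consequently, when your upward sweep first reaches an action with positive win probability --- namely $p$'s weakest supported action --- its equation has the form $v_p + c^p_j = \prod_{k\neq p} F_k(\score^p_j)$, i.e.\ a single equation in the $d-1$ still-undetermined bottom probabilities, and subsequent equations keep introducing further unknowns. ``Involves only quantities at smaller thresholds'' is not the same as ``already solved''; the sweep happens to work only for $d=2$, where the bottom zone is a single action.

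The fix is to run the elimination in the opposite direction, which is what the paper does: the globally strongest supported action necessarily belongs to $p$ and wins with probability $1$, so $p$'s equilibrium payoff $v_p=1-c$ is known at the outset; processing the supported actions in \emph{decreasing} order of score, each payoff equation (right-hand side $0$ for the other players, $v_p$ for $p$'s actions) contains exactly one new unknown --- the probability of the action processed immediately before it --- and the probability of each player's weakest supported action is recovered at the end from $\sum_j x^i_j=1$. Two secondary points: the logarithmic linearization you open with should simply be dropped (the normalization constraints are not linear in the logs, some cumulative values are $0$, and exactness/rationality would be lost --- you half-retract it yourself); and player $p$'s indifference equations cannot be relegated to a verification step after the opponents' distributions are ``pinned down,'' since they are needed inside the elimination to determine the other players' probabilities (already in the two-player case the non-$p$ player's probabilities come only from $p$'s indifference conditions).
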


\begin{proof}
Given a game ${\cal G}$, suppose that we remove the pure strategies that
are not in the support of some (unknown) Nash equilibrium. The resulting
game ${\cal G}'$ has a fully-mixed equilibrium ${\cal N}$, thus any
two strategies that belong to a player have the same expected payoff
in ${\cal N}$. Our general approach is to compute the probabilities
$x^i_j$ in descending order of strength of the associated actions $a^i_j$.

Let $a^i_j$ be the strongest action in ${\cal G}'$ (i.e. having the
highest score). Player $i$'s expected payoff is $1-c^i_j$ and by
Theorem~\ref{thm:onepositive} all other players have expected payoff 0.

Let $a^{i'}_{j'}$ be the second-strongest action in ${\cal G}'$; we may
assume $i'\not= i$ since if $i'=i$ then $a^i_j$ would be strictly
dominated by $a^{i'}_{j'}$. Its expected payoff to $i'$ is $-c^{i'}_{j'}
+ (1-x^i_j)$, which by Theorem~\ref{thm:onepositive} is 0, so we have an
expression for $x^i_j$. Consider the third-strongest action
$a^{i''}_{j''}$, whose payoff is given by $-c^{i''}_{j''} +
(1-x^i_j)(1-x^{i'}_{j'})$ (assuming $i''\not= i$) which gives us an
expression for $x^{i'}_{j'}$.

Generally, the $r$-th strongest action $a^\alpha_\beta$ has expected payoff
$-c^\alpha_\beta + \prod_{k\not=\alpha} (1 - \sigma_{k})$
where $\sigma_{k}$ is the sum of probabilities of player $k$'s actions that are stronger than $a^\alpha_\beta$.

The probabilities for each player's weakest actions will be obtained from
the equations that ensure that for every player $i$, the values
$x^i_j$ sum to 1 (are a probability distribution).
\end{proof}

\begin{observation}\label{thm:rational}
For games where ties are impossible, if all action costs are rational numbers smaller than $1$
then the solution is also a rational number.
\end{observation}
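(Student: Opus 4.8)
The plan is to re-examine the algorithm implicit in the proof of Theorem~\ref{thm:supportpoly} and verify that every probability it outputs is produced from the input costs by finitely many additions, subtractions, multiplications, and divisions, hence lies in the rationals. I would phrase this as an induction on the strength-rank $r$ of the surviving actions, in the order the algorithm processes them.

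The base case is immediate: the owner of the strongest surviving action $a^i_j$ has expected payoff $1-c^i_j$, which is rational, and by Theorem~\ref{thm:onepositive} all other players have expected payoff $0$. For the inductive step, when the algorithm reaches the $r$-th strongest surviving action $a^\alpha_\beta$ it writes down the equation asserting that this action's expected payoff is $0$ (legitimate by Theorem~\ref{thm:onepositive}), namely $c^\alpha_\beta = \prod_{k\neq\alpha}(1-\sigma_k)$, where $\sigma_k$ is the sum of player $k$'s probabilities on actions stronger than $a^\alpha_\beta$. By the inductive hypothesis every one of these partial sums is a rational number that has already been computed, except the one containing the single probability $x^{i'}_{j'}$ that the algorithm is about to determine (the probability of the $(r-1)$-st strongest action, whose owner $i'$ differs from $\alpha$), and in that partial sum $x^{i'}_{j'}$ occurs with coefficient $1$. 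Hence the equation is affine in $x^{i'}_{j'}$ with rational coefficients, and the algorithm solves it using one subtraction and one division, so $x^{i'}_{j'}$ is rational. Finally, the probability of each player's weakest surviving action equals $1$ minus a sum of previously-computed rationals, so the whole strategy profile is rational.

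The step that needs genuine care — and the one I would regard as the main obstacle — is verifying that the division performed at each inductive step is by a nonzero quantity, i.e.\ that the product $\prod_{k\neq\alpha,\,i'}(1-\sigma_k)$ does not vanish. A factor $1-\sigma_k$ is zero exactly when player $k$ places all of its probability on actions stronger than $a^\alpha_\beta$; invoking the description of the ``dominant'' player $p$ from the proof of Theorem~\ref{thm:onepositive}, one argues that this can only happen in a degenerate situation in which player $\alpha$ is forced to be pure on its zero-cost action $a^\alpha_1$ (its payoff from $a^\alpha_\beta$ being $-c^\alpha_\beta\le 0$, which forces $c^\alpha_\beta=0$ and hence $\beta=1$). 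Since such a player's probability vector is $(1,0,\ldots,0)$, which is trivially rational, these cases can be peeled off at the start, after which the induction goes through with every divisor strictly positive. Everything else is routine bookkeeping about which unknown each successive equation introduces.
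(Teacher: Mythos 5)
Your argument is essentially the paper's own: the paper disposes of this observation with the single remark that rationality is immediate from the expressions for the $x^i_j$ in the proof of Theorem~\ref{thm:supportpoly}, which is exactly the induction you spell out (each new probability is obtained from rational costs and previously computed rationals by one subtraction and one division). The additional care you devote to non-vanishing divisors goes beyond anything the paper records---it treats the observation as immediate---so while that discussion is a reasonable refinement (and its ``peel off degenerate players'' step is itself only sketched), it does not change the fact that your route coincides with the paper's.
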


This is immediate from the expressions in the above proof that give the
values $x^i_j$.

\subsubsection{Solving 2-player games exactly}

Ranking games (as in~\cite{BFHS}) with actions that do not have the
upfront costs $c^i_j$ we consider here, are constant-sum, so they can be
solved efficiently in the 2-player case. Our games are not constant-sum, but
we do have an alternative polynomial-time algorithm to solve them in the
2-player case.

\begin{theorem}\label{thm:2player-no-ties}
2-player ranking games that have competitiveness\hyph based strategies and are without ties can be solved
exactly in polynomial time.
\end{theorem}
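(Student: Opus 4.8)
The plan is to reduce to a tie\hyph free two\hyph player game with a single prize, to give an exact combinatorial description of its Nash equilibria, and then to extract one by a reachability computation in a small directed acyclic graph. With $d=2$ players there are effectively only two prizes, so after the preprocessing of Section~\ref{sec:preprocessing} we may assume $u_1=1$, $u_2=0$, every cost lies in $[0,1)$, and each player owns a $0$\hyph cost (hence weakest) action. As ties are impossible, the $2n$ actions of the two players are totally ordered by strength; let player~$1$ own the globally strongest action. By Theorem~\ref{thm:onepositive}, in \emph{every} Nash equilibrium player~$1$ has a positive expected payoff and player~$2$ has payoff~$0$; and that positive payoff must be $V_1=1-c^1_{p_1}$, where $a^1_{p_1}$ is forced to be the \emph{weakest} action of player~$1$ that beats all of player~$2$'s actions, since a cheaper such action would always win and would be a strictly better response.

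Next I would pin down the support structure. Let $F_i(a)$ denote the probability that player~$i$'s random action is weaker than~$a$; then player~$1$'s payoff from action~$a$ is $F_2(a)-c(a)$ and player~$2$'s is $F_1(a)-c(a)$, which the equilibrium conditions force to equal $V_i$ on player~$i$'s support and to be at most $V_i$ elsewhere. From the within\hyph support indifference equalities, together with the strict monotonicity of costs, one shows that when $T_1\cup T_2$ is listed in order of strength it strictly alternates between the two players, that $|T_1|=|T_2|$, that its strongest element is $a^1_{p_1}$, and that its weakest element is player~$2$'s weakest action $a^2_1$. The crucial step is a deviation analysis applied to each pair $a>b$ of consecutive support actions: a short computation of the relevant $F_i$\hyph values shows that no action belonging to the owner of~$a$ may have strength strictly between $\score(b)$ and~$\score(a)$ --- equivalently, each action in the sorted support is the weakest action \emph{of its owner} that is stronger than the next support action down. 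Consequently the sorted support is precisely an $a^1_{p_1}$\hyph to\hyph $a^2_1$ path in the directed acyclic graph~$D$ on the $2n$ actions in which there is an edge from an action~$a$ to every action~$b$ of the other player whose strength lies strictly between $\score(a)$ and the strength of the next\hyph weaker action of~$a$'s owner (with no lower bound when $a$ is its owner's weakest action); note all edges go to strictly weaker actions, so $D$ is acyclic.

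For the converse I would check that \emph{any} such path $a^1_{p_1}>a^2_{q_1}>a^1_{p_2}>\cdots>a^1_{p_m}>a^2_{q_m}=a^2_1$ in~$D$ yields a Nash equilibrium. The within\hyph support indifference equations have a unique solution, given by the telescoping formulas $x^1_{p_1}=1-c^2_{q_1}$, $x^1_{p_i}=c^2_{q_{i-1}}-c^2_{q_i}$ for $i\ge 2$, $x^2_{q_i}=c^1_{p_i}-c^1_{p_{i+1}}$ for $i<m$, and $x^2_{q_m}=V_1+c^1_{p_m}$ (one could instead invoke Theorem~\ref{thm:supportpoly}); cost\hyph monotonicity makes these non\hyph negative, they sum to~$1$, and the defining edge property of~$D$ is exactly what is needed to guarantee that every out\hyph of\hyph support action is a non\hyph profitable deviation. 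The algorithm is then immediate: build~$D$, identify $a^1_{p_1}$, find any $a^1_{p_1}$\hyph to\hyph $a^2_1$ path in~$D$ (one exists, since by Nash's theorem the game has an equilibrium, whose support is such a path), and output the profile given by the formulas above; every step is polynomial in the input size.

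The main obstacle is the structural characterization of the second paragraph, and within it the deviation computation establishing that consecutive support actions must be ``adjacent'' in the stated sense (together with the alternation and the anchoring at $a^1_{p_1}$ and $a^2_1$). Once that is in place, the reduction to a reachability question in~$D$, and the verification that the explicit probabilities form an equilibrium, are routine.
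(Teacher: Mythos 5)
Your proposal is correct, but it takes a genuinely different route from the paper's. The paper first eliminates, for each player, any strategy that beats exactly the same set of opponent strategies as a cheaper strategy of the same player, so that the surviving strategies of the two players interleave in strength; it then argues, inside a fixed equilibrium, that if the row player skips a strategy then the column player's cheapest strategy beating it becomes dominated and is also skipped, and so on up the chain, forcing each support to be a prefix of that player's interleaved strategies (possibly omitting the weakest); finally it enumerates the polynomially many candidate supports and solves a linear program for each, so it never needs a sufficiency (``every candidate support works'') direction at all --- Nash's theorem guarantees some candidate succeeds. You instead pin the equilibrium payoffs via Theorem~\ref{thm:onepositive}, characterize supports exactly as alternating descending chains anchored at $a^1_{p_1}$ and $a^2_1$ whose consecutive elements are adjacent in your DAG $D$, and then prove sufficiency with closed-form telescoping probabilities, reducing the computation to path-finding with no LP. Your approach buys more --- explicit rational equilibrium probabilities (making Observation~\ref{thm:rational} transparent) and a complete description of the equilibria of the reduced game --- at the price of proving both directions; the steps you defer (``one shows'') do go through: anchoring at $a^1_{p_1}$, and hence $V_1=1-c^1_{p_1}$, follows because otherwise player~2 would have an action of cost below~$1$ beating player~1's entire support, contradicting player~2's zero payoff from Theorem~\ref{thm:onepositive}; alternation follows from within-support indifference plus strict cost monotonicity; adjacency is exactly the deviation computation you describe; and your probability formulas are nonnegative, sum to one, and make all out-of-support deviations unprofitable precisely because of the edge condition defining $D$. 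The only caveat is that, like the paper's weak-dominance observation at the start of Section~\ref{sec:noties}, discarding cost-$1$ actions means your characterization covers the reduced game only, which is enough for computing one equilibrium of the original game, as the theorem requires.
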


\begin{proof}
As before, assume a single prize of 1 unit and action costs in $[0,1]$, which
may be assumed by the preprocessing noted earlier.

We can (in polynomial time) compute exact solutions of 2-player games
of this type as follows. We start by eliminating certain dominated
strategies. Specifically, suppose that for strategies $a^i_j$ and $a^i_{j+1}$,
the set of opponent's strategies that they win against, is the same.
Then $a^i_{j+1}$ can be eliminated. Rename the
strategies of this game $(a_1,\ldots a_n)$ for the row player and
$(a'_1,\ldots a'_{n'})$ for the column player. Assume without loss of
generality that it is the row player who has the weakest strategy,
thus $a'_1$ wins against $a_1$. When $n=n'$, the strategies, arranged
in ascending order of strength are $a_1,a'_1,a_2,a'_2,\ldots a_n,a'_n$.
When instead $n \neq n'$, then it must be the case that the same player
has weakest and strongest strategy; the strategies, arranged in ascending
order of strength, are $a_1, a_1', \ldots, a_n, a_n', a_{n+1}$ in this case.

Suppose that in some Nash equilibrium ${\cal N}$  the row player
does not use strategy $a_j$ for some $j>1$ (that is, the player plays
$a_j$ with probability 0). Then the column player does
not use strategy $a'_j$ (which is the cheapest one that wins against $a_j$)
since $a'_j$ would now be dominated by $a'_{j-1}$.
For a similar reason, the row player will not use $a_{j+1}$, the cheapest strategy
that wins against $a'_j$, so the column player will not use $a'_{j+1}$, and so
on. This shows that (in a Nash equilibrium) the strategies in either
player's support must be either a
prefix of the sequence of his strategies or a prefix of all his strategies but the weakest,
with strategies arranged in ascending order of strength.

We can now try to solve for all such supports, since there are
polynomial-many of them. Recall that a 2-player game can be solved
efficiently in polynomial time if we are told the support of a
solution, since it reduces to a linear program (see, for example, page~31 in~\cite{AGTbook}).
\end{proof}

The main property used to show Theorem~\ref{thm:2player-no-ties} above (i.e., if in any solution
a player does not play a certain strategy $s$ then the other one does
not play the strategy that ``just beats'' $s$) breaks down when ties are
allowed. Simply consider two consecutive strategies of player $1$,
$a_i^1$ and $a_{i+1}^1$, and of player $2$, $a_j^2$ and $a_{j+1}^2$ such
that scores of $a_i^1$ and $a_j^2$  (resp. $a_{i+1}^1$ and $a_{j+1}^2$)
are the same. In this case, we can no longer conclude that in any solution if player
$1$ does not play $a_i^1$ then player $2$ does not play $a_{j+1}^2$ as
such a strategy is used not just to beat $a_i^1$ but also to share with
$a_{i+1}^1$. This shows that for games where ties may occur,
we have to use different approaches to obtain polynomial-time algorithms for them.

\subsection{Exact algorithm for linear-prize ranking games}\label{sec:linear}

Consider a $d$-player $n$-strategies-per-player ranking game $\G$ with
competitiveness-based strategies without ties in which the prize for ranking
$k$-th is a linear function $a-k b$, for some values~$a$ and $b$.
We call $\G$ a \emph{linear-prize ranking game}.
We claim that we can represent $\G$ as a poly matrix game \cite{DPpolym}.
A poly matrix game can be represented as a graph: players are the verticals
and a player's payoff depends on the actions of his neighbors.
The edges are $2$-player zero-sum games. Once all players have chosen a strategy,
the payoff of each player is the sum of the payoffs of the games played with his neighbors.
Dashikis and Papadimitriou~\cite{DPpolym} give an algorithm which solves
polymatrix games in polynomial time.

We can express $\G$ as a polymatrix game as follows.
We define a complete $(d+1)$-vertex graph where the additional vertex encodes an
external player $N$, which we call ``nature''.
For players~$i$ and~$i'$ in~$\G$, edge $(i,i')$ is an $n\times n$
$2$-player constant-sum game. (This is a zero-sum game, shifted by a constant ---
this shifting can be accommodated in polymatrix games.)
The game on edge $(i,i')$ ``punishes'' the lowest-ranked player amongst~$i$ and~$i'$.
In particular,  in the matrix of this game,
the entry $(j,j')$ will have payoff $0$ for player $i$ and $-b$ for player
$i'$ if and only if $\score^i_j>\score^{i'}_{j'}$.
Edge $(i,N)$ is an $n\times 1$ $2$-player zero-sum game in which nature ``gains''
what player $i$ is paying in effort minus $a-b$.
In particular, each entry $j$ will have payoff $a-b-c^i_j$ for player $i$
and payoff $c^i_j-a+b$ for nature. Note that once all players of $\G$ have chosen strategies,
the payoff to the player who is ranked $k$-th is $a - k b$ minus his cost.
He loses~$b$ to each of the $k-1$ players who beats him and he gains $a-b$
and loses his cost of effort, to nature.
Thus, using the algorithm in~\cite{DPpolym}, we obtain the following theorem.

\begin{theorem}\label{thm:linear}
There is a polynomial-time algorithm that computes a Nash equilibrium for linear-prize ranking games.
\end{theorem}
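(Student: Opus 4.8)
The plan is to read Theorem~\ref{thm:linear} off the polymatrix encoding constructed above, so that all that remains is to (i)~check that every edge game of the $(d+1)$-vertex polymatrix game---call it $G'$---is constant-sum, hence (after the per-edge constant shift mentioned above) a zero-sum polymatrix game to which the algorithm of~\cite{DPpolym} applies; (ii)~verify that for every pure-strategy profile the payoff of each of the $d$ original players in $G'$ equals his payoff in $\G$; (iii)~deduce that Nash equilibria of $G'$ restrict to Nash equilibria of $\G$; and (iv)~observe that $G'$ has size polynomial in that of $\G$, so that running~\cite{DPpolym} and discarding nature's (forced) strategy yields a Nash equilibrium of $\G$ in polynomial time.

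The heart of the argument is step~(ii), and this is where the ``no ties'' hypothesis is used. Fix a pure-strategy profile of $\G$. Since no two actions have equal score, the scores induce a strict total order on the $d$ players, so for each $k$ there is a well-defined player ranked $k$-th; say this is player~$i$ playing $a^i_j$. He beats exactly $d-k$ of the other players and is beaten by exactly $k-1$ of them. On each edge $(i,i')$ to a player~$i'$ whom he beats he collects $0$; on each edge to a player who beats him he collects $-b$; and on the edge $(i,N)$ to nature he collects $a-b-c^i_j$. Summing over his $d$ incident edges, his total payoff in $G'$ is $(a-b-c^i_j)-(k-1)b = (a-kb)-c^i_j$, which is precisely the prize $u_k=a-kb$ for rank~$k$ minus his cost of effort, i.e.\ his payoff in $\G$. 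Because mixed payoffs are multilinear in the players' distributions, the same identity holds in expectation under any mixed-strategy profile.

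Steps~(i), (iii) and~(iv) are then routine. Each edge $(i,i')$ is constant-sum with constant $-b$ and each edge $(i,N)$ is zero-sum, so $G'$ is, up to the harmless per-edge shift, a zero-sum polymatrix game; player~$N$ has a single action and is trivially best-responding in every profile; and by~(ii) the expected payoff of each player $i\leq d$, viewed as a function of the mixed strategies of the other original players, is identical in $G'$ and $\G$. Hence a mixed profile of $G'$ is a Nash equilibrium if and only if its restriction to players $1,\ldots,d$ is a Nash equilibrium of $\G$. Finally $G'$ has $d+1$ vertices and $O(d^2)$ edges, each an $n\times n$ or $n\times 1$ matrix whose entries are simple combinations of the $c^i_j$ and of $a,b$, so its encoding length is polynomial in that of $\G$, and the polynomial-time bound follows from~\cite{DPpolym}. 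I expect the only point needing genuine care---rather than bookkeeping---to be verifying that the constant-sum-to-zero-sum shift on the edges $(i,i')$ is legitimate within the polymatrix framework, i.e.\ that it leaves the equilibrium set unchanged and keeps the game in the exact class solved by~\cite{DPpolym}.
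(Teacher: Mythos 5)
Your proposal is correct and follows essentially the same route as the paper: express the game as a polymatrix game with an added ``nature'' vertex, check that the $k$-th ranked player's edge payoffs sum to $a-kb$ minus his cost, and invoke the algorithm of~\cite{DPpolym} (the paper likewise handles the constant-sum edges by noting the constant shift can be accommodated in the polymatrix framework). Your write-up simply spells out the payoff bookkeeping and equilibrium-preservation steps in more detail than the paper's sketch.
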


\subsection{Games where ties are possible}\label{sec:ret-symm}

We consider a more general situation in which two or players may have pure strategies
having the same score, and if they play those strategies, any prizes are shared.
We begin by showing that we can study without loss of generality Nash
equilibria of competitiveness-based ranking games in which scores are symmetric.
That is, players have a shared set of $n$ pure strategies, $a_1, \ldots, a_n$.
As above, player $i$ has player-dependent costs $c^i_1 < \ldots < c^i_n$, but
the scores are player-independent, and we denote them $\score_1 < \ldots < \score_n$.

\subsubsection{Reduction to score-symmetric games}\label{sec:score-symm}

The reduction preserves Nash equilibria of the original game (and thus
is a Nash homomorphism) and is presented for the case of $2$-player
games. The generalisation to the $d$-player case is straightforward, although the number of
strategies per player would increase by a factor of $d$.

Suppose that player 1 has action $a^1_j$ and player
2 has no action with score equal to $\score^1_j$. We give player 2
a weakly dominated strategy with score $\score^1_j$ --- if $a^2_k$ is the weakest
strategy of player 2 that has higher score than $a^1_j$, give player 2
an additional strategy with cost $c^2_k$ and score $\score^1_j$. If player
2 does not have a stronger strategy than $a^1_j$, give player 2 an
additional strategy with cost 1 and score $\score^1_j$.

We can assume that each player has $n$ strategies $a_1,\ldots,a_n$ with
scores $\score_1,\ldots,\score_n$ and costs $c_1^1,\ldots,c^1_n$ for player 1 and
$c^2_1,\ldots,c^2_n$ for player 2. Suppose we solve this game, and now we have to recover a
solution to the original game before the weakly dominated strategies were
added. To do this, each player just has to replace their usage of
any weakly dominated strategy by the corresponding weakly dominating one.
Let $a_j$ be a weakly dominated action of player 1; $a_{k}$ denotes the
corresponding weakly dominating action, $k>j$ and then $\score_k > \score_j$.
The probability of player 1 playing $a_j$ at equilibrium is
then added to his probability of playing $a_{k}$. This raises the
question of whether player 2 may be given an incentive to deviate as a
consequence. Such an incentive can only concern actions with scores
within the interval $I:=[\score_j, \score_{k}]$ (for actions with score
outside $I$, player 2 has the same payoff when player 1 plays either
$a_j$ or $a_{k}$ since in these cases the ranking of the players do not
change). First, note that if in the Nash equilibrium player 2 plays
actions with scores in $I$ with positive probability then player $1$
plays $a_j$ with probability $0$. This is because player 1 strictly
prefers $a_{k}$ to $a_j$ when player 2 plays actions with score in $I$
and is indifferent between them for the remaining actions. Therefore, we
do not need to redistribute probability mass in this case. On the other
hand, whenever in the Nash equilibrium the actions having score in $I$
are played by player 2 with zero probability then the change of the
probability distribution of player 1 has no effect on player 2; as
observed above, for each action with score outside $I$, player 2 has the
same payoff when player 1 plays either $a_j$ or $a_{k}$ and then he has
no incentive to deviate.

\subsubsection{Score-symmetric games and pure equilibria}\label{sec:retsym:pne}

Unlike games without ties, for which $2$-player $2$-action games might not
possess pure equilibria (see Example \ref{ex:2player}), score-symmetric games
in which players have only $2$ strategies do have pure Nash equilibria
(for any number of players and any number of prizes).

\begin{theorem}\label{thm:retsym:pne}
2-action competitiveness-based score\hyph symmetric ranking games do have pure Nash
equilibria (any number of players; any action costs for individual players).
Furthermore, a pure Nash equilibrium can be found in polynomial time.
\end{theorem}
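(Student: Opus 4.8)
The plan is to exploit the very restricted structure of pure profiles when each player has only two actions $a_1,a_2$ (with $\score_1<\score_2$). Because the game is score-symmetric, a pure profile is determined, up to payoffs, by the \emph{number} $t$ of players who choose the stronger action $a_2$: these $t$ players are all tied at the top, so by the tie-breaking convention each of them receives the average of the top $t$ prizes, $P(t):=\tfrac1t\sum_{k=1}^t u_k$, minus his own cost $c^i_2$; symmetrically, each of the $d-t$ players who chose $a_1$ receives $Q(t):=\tfrac{1}{d-t}\sum_{k=t+1}^d u_k$, minus $c^i_1$ (which the preprocessing of Section~\ref{sec:preprocessing} lets me take to be $0$; alternatively one works with the cost difference $c^i_2-c^i_1$ throughout).

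First I would write down the two deviation inequalities a profile with $t$ players on $a_2$ must satisfy. A player on $a_2$ has no incentive to move to $a_1$ iff $c^i_2\le P(t)-Q(t-1)$, and a player on $a_1$ has no incentive to move to $a_2$ iff $c^i_2\ge P(t+1)-Q(t)$. Setting $\theta(t):=P(t)-Q(t-1)$ and noting $P(t+1)-Q(t)=\theta(t+1)$, these read $c^i_2\le\theta(t)$ and $c^i_2\ge\theta(t+1)$ respectively. Hence, after sorting the players so that $c^{(1)}_2\le\cdots\le c^{(d)}_2$, the ``threshold'' profile in which the $t$ cheapest players (with respect to $c_2$) play $a_2$ is a pure Nash equilibrium precisely when $c^{(t)}_2\le\theta(t)$ and $c^{(t+1)}_2\ge\theta(t+1)$, under the conventions $c^{(0)}_2=-\infty$ (so the first condition is vacuous when $t=0$, i.e.\ nobody plays $a_2$) and $c^{(d+1)}_2=+\infty$ (so the second is vacuous when $t=d$).

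The existence argument is then a short monotonicity-free observation. Let $B_t$ denote the statement ``$c^{(t)}_2\le\theta(t)$'' for $t=1,\dots,d$, and set $B_0\equiv\mathrm{true}$, $B_{d+1}\equiv\mathrm{false}$. Let $t^\star$ be the largest index in $\{0,\dots,d\}$ for which $B_{t^\star}$ holds (well defined, since $B_0$ holds). Then $B_{t^\star}$ is true and $B_{t^\star+1}$ is false (either because $t^\star=d$ and $B_{d+1}$ is false by convention, or by maximality of $t^\star$), so the threshold profile with $t^\star$ players on $a_2$ satisfies both deviation inequalities and is a pure Nash equilibrium. Crucially, I do \emph{not} need $\theta$, or the $B_t$, to be monotone in $t$ --- in general they are not; it is only the boundary values $B_0=\mathrm{true}$ and $B_{d+1}=\mathrm{false}$ that force a downward crossing. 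Polynomial time is then immediate: compute $P(t),Q(t),\theta(t)$ for all $t\le d$, sort the $d$ costs $c^i_2$, and scan for $t^\star$.

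The one point that needs care --- and which I regard as the crux --- is the bookkeeping at the two extremes $t=0$ and $t=d$, where one side of the profile is empty and one of $P(t),Q(t)$ is undefined, together with the check that a player who is exactly indifferent causes no trouble. For this, note that when $B_{t^\star+1}$ is false we in fact obtain the strict inequality $c^i_2>\theta(t^\star+1)$ for every player on $a_1$, while $B_{t^\star}$ true gives the (sufficient) non-strict inequality $c^i_2\le\theta(t^\star)$ on the $a_2$ side; so no player strictly gains by deviating, including in the boundary cases. Sanity-checking against Example~\ref{ex:manyplayers} (single prize, symmetric costs $c$) gives $\theta(t)=1/t$ for $t\ge 2$, so $t^\star=\lfloor 1/c\rfloor$, which recovers exactly the pure equilibria described there.
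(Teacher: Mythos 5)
Your proposal is correct and follows essentially the same route as the paper: sort the players by their cost of the strong action, restrict attention to prefix (threshold) profiles, and use the cost ordering to reduce the equilibrium check to the two boundary inequalities, which is exactly the comparison $\frac{u_1+\cdots+u_t}{t}-c^i_2$ versus $\frac{u_t+\cdots+u_d}{d-t+1}$ appearing in the paper's argument. The only difference is cosmetic: the paper finds the right $t$ by a greedy scan that lets players deviate in cost order until the next player declines, whereas you pick the largest $t$ with $c^{(t)}_2\le\theta(t)$ directly; both yield a valid pure equilibrium and a polynomial-time algorithm.
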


\begin{proof}
We have 2 pure strategies $a_1$ and $a_2$, where $a_1$ is less competitive; thus
for each player $i$ we have $c^i_1 \leq c^i_2$. Recall that by preprocessing we
may assume that $c^i_1=0$, so that $c^i_2$ is non-negative, for all $i$.
We show how to identify a pure Nash
equilibrium that consists of a (potentially empty) set of players
playing $a_2$, all of whom have a cost for playing $a_2$ lower than the ones
playing $a_1$.

We may assume that the players are indexed in non-decreasing order of their cost of
playing $a_2$, so that for $1\leq i<n$ we have $c^i_2 \leq c^{i+1}_2$.
Now, let ${\cal Z}_i$ denote the pure
profile in which the first $i$ players play $a_2$ and the remaining $d-i$
players play $a_1$.
We claim that if $i$ has an incentive to deviate from ${\cal Z}_{i-1}$ then no
player $i'$ has an incentive to deviate from ${\cal Z}_i$, with $i' < i$.
Indeed, if $i$ has an incentive to deviate from ${\cal Z}_{i-1}$,
then $\frac{u_i+\ldots+u_d}{d-i+1}$, his utility in ${\cal Z}_{i-1}$, is strictly less than
$\frac{u_1+\ldots+u_i}{i}-c^i_2$, his utility in ${\cal Z}_{i}$.
Since, by definition,  $c^{i'}_2 \leq c^i_2$, for $i' < i$, the previous inequality
implies that player $i'$ is better off by sticking to $a_2$ in ${\cal Z}_{i}$.
Therefore, starting from the profile ${\cal Z}_{0}$ in which all players play $a_1$,
we initially check whether player $1$ has an incentive to deviate.
If not, the profile is a pure Nash equilibrium, otherwise we let him deviate
and we have the profile ${\cal Z}_{1}$.
We can reiterate this process for each player $i$ in this ordering until we reach the
break-even point at which the share of the prize obtained from playing $a_2$ goes
down below the cost of the next player in line.
This shows the existence of the claimed pure Nash Equilibrium, and also constitutes
an efficient algorithm for finding it.
\end{proof}

A related result is known for symmetric games in which players have only 2 strategies:
These games always have a pure Nash equilibrium~\cite{CRVW}.
The above theorem concerns games that are anonymous but not symmetric
(as costs are player-specific). However, let us notice that the arguments used in~\cite{BFH}
to prove the result about symmetric games appear to be similar to ours.
It is easy to see that these arguments fail when two players have three
strategies available, as shown by the next example.

\begin{example}
We have 2 players and 3 actions, namely $a_1$, $a_2$ and $a_3$ ordered
increasingly by score, i.e., $\score_1< \score_2< \score_3$. The prizes are
$u_1=1$ and $u_2=0$, $u_3=0$. Costs are $c_1^i=0$ for $i=1,2$,
$c^1_{2}=\frac{2}{3}, c^1_{3}=\frac{4}{5}$ and $c^2_{2}=\frac{1}{3},
c^2_{3}=\frac{2}{3}$. Thus, we have payoff matrix
\[
\begin{array}{r|ccc}
          & a_1  &  a_2 & a_3  \\  \hline
  a_1   & (\frac{1}{2},\frac{1}{2})  &  (0,\frac{2}{3}) &  (0,\frac{1}{3}) \\
  a_2   & (\frac{1}{3},0)  &  (-\frac{1}{6},\frac{1}{6}) &  (-\frac{2}{3},\frac{1}{3}) \\
  a_3   & (\frac{1}{5},0)  &  (\frac{1}{5},-\frac{1}{3}) & (-\frac{3}{10},-\frac{1}{6})
\end{array}
\]
It is easily checked that this game has no pure Nash equilibrium.
The unique Nash equilibrium of the game is $(\frac{2}{3},0,\frac{1}{3})$ for player $1$
and $(\frac{2}{5},\frac{3}{5},0)$ for player $2$.
\end{example}

\subsubsection{PTAS for many players who share a fixed set of strategies}\label{sec:ptas:fixedstrategies}

Consider a score-symmetric $d$-player game, each of whom have
actions $a_1,\ldots,a_n$ with scores $\score_1,\ldots,\score_n$.
In this section we view the number of actions $n$ as a constant, and we
are interested in algorithms whose runtime has polynomial dependence on $d$,
the number of players. A score-symmetric game is a special case of an
anonymous game, so it is possible to directly apply a result of Daskalakis
and Papadimitriou~\cite{DP} to show that it has a PTAS.
Here, we give a conceptually simpler PTAS.

In the PTAS, we first round the cost vectors of the players.
This does not introduce much error, but it ensures that there are only
a constant number of different cost vectors.
We refer to the cost vector of a player as its ``type''.
Now the point is that players of the same type are
equivalent in the following sense --- once we know how many players of each type adopt
each (mixed) strategy, we can examine the resulting strategy profile to check whether it
is an $\epsilon$-Nash equilibrium.
It is only important \emph{how many} players of each type adopt a particular strategy --- it is not important
which  players they are. Thus, brute-force search is quite efficient.
Technically this algorithm is {\em oblivious} in the sense of
Daskalakis and Papadimitriou~\cite{DP09}, in that it constructs a polynomial-sized
set of mixed-strategy profiles in such a way that at least one of them should be
an approximate equilibrium, and checks each of them.

By a {\em $k$-composition} of a positive integer~$N$, we mean a solution to $N_1 + \ldots + N_k = N$
in which $N_1,\ldots,N_k$ are non-negative integers. There are less than $N^k$ such solutions.

\begin{algorithm}[h]\label{ptasRS}

\caption{PTAS for score-symmetric $d$-player games having a constant number, $n$,
of pure strategies and a given accuracy parameter $\epsilon>0$}

Let $\delta=\epsilon/n$ and $\ell =\lceil 1/\epsilon \rceil$.

For each player, round each cost $c^i_j$ down to the nearest non-negative
integer multiple of $\epsilon$. Two players have the same {\em type} if they have
the same set of rounded costs. Let ${\cal T}$ be the set of types; note that
$|{\cal T}|\leq {(\ell+1)}^n$.

\label{stepRS:constantactions}

Let $S$ be the set of $n$-dimensional probability vectors $\{(x_1,\ldots,x_n)\}$
in which each $x_j$ is a non-negative integer multiple of $\delta$.
(Mixed strategies for individual players will be sought from elements of $S$.)
Let $s=|S|$, and note that $s\leq {(n\ell+1)}^n$.

\label{stepRS:all}

Perform a brute-force search as follows.
For every type~$t\in {\cal T}$, consider every $s$-composition
$d_t = d_{t,1} + \cdots + d_{t,s}$ of the $d_t$~players of type~$t$.
Consider the strategy profile in which, for all $t\in {\cal T}$ and all $j\in\{1,\ldots,s\}$,
$d_{t,j}$ players of type~$i$ play the $j$-th strategy in~$S$.
Check whether this strategy profile is an $\epsilon$-Nash equilibrium.
\label{stepRS:bruteforce}

Return an $\epsilon$-Nash equilibrium if one is found.
\end{algorithm}

\begin{theorem}\label{thm:ptasfixedstrat}
For any constant $\epsilon>0$ and any constant $n$, Algorithm~\ref{ptasRS}
returns a $2\epsilon$-Nash equilibrium in time polynomial in $d$.
\end{theorem}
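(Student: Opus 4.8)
The argument has two parts: a running-time analysis, and a correctness argument showing that the set of profiles enumerated by the algorithm contains a $2\epsilon$-Nash equilibrium of the input game, so the test in the final step succeeds.

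\emph{Running time.} After the preprocessing of Section~\ref{sec:preprocessing} every cost lies in $[0,1]$, so each rounded cost takes one of the $\ell+1$ values $0,\epsilon,\ldots,\ell\epsilon$; hence $|{\cal T}|\le(\ell+1)^n=O(1)$ and $s=|S|\le(n\ell+1)^n=O(1)$. For a fixed type~$t$ there are fewer than $d_t^{\,s}\le d^{\,s}$ $s$-compositions of $d_t$, so the brute-force step inspects at most $\prod_{t\in{\cal T}}d^{\,s}=d^{\,s|{\cal T}|}$ profiles, polynomial in $d$. Testing condition~(\ref{eq:epsilonNEconstraints}) for one profile is polynomial in $d$ and $n$: for each player $i$ and action $a_j$, the expected payoff of $i$ playing $a_j$ is $-c^i_j$ plus $\sum_{k,t}\Pr[k\text{ others beat }\score_j,\ t\text{ others tie}]\cdot\frac{u_{k+1}+\dots+u_{k+t+1}}{t+1}$, and the joint distribution of (number beating, number tying) a given score is a convolution of the $d-1$ independent single-opponent laws, computable by an obvious dynamic program over the other players; comparing these values across each player's $\le n$ actions verifies~(\ref{eq:epsilonNEconstraints}).

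\emph{Correctness: reduction.} Let $\widehat{\G}$ be the game with the rounded costs. Since each cost drops by less than $\epsilon$, in any mixed profile the expected payoff of a fixed pure strategy changes by less than $\epsilon$ between $\G$ and $\widehat{\G}$, so the gap between the expected payoffs of any two pure strategies of a player changes by less than $\epsilon$; hence every $\epsilon$-Nash equilibrium of $\widehat{\G}$ is a $2\epsilon$-Nash equilibrium of $\G$. It therefore suffices to exhibit, among the enumerated profiles, an $\epsilon$-Nash equilibrium of $\widehat{\G}$. By Nash's theorem~\cite{N}, $\widehat{\G}$ has an equilibrium $x^*$; I would round it to the grid by applying the prefix-preserving rounding of Observation~\ref{obs:roundprob} (with $\delta=\epsilon/n$ in place of~$\epsilon$) to each player's mixed strategy, obtaining a profile $\tilde x$ with every player's strategy in $S$. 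Up to relabelling players within a type (harmless by anonymity), $\tilde x$ is one of the profiles the algorithm considers, and it remains to prove that $\tilde x$ is an $\epsilon$-Nash equilibrium of $\widehat{\G}$.

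\emph{Correctness: the crux.} Fix a player $i'$ and an action $a_j$. Because payoffs depend only on the induced ranking, $i'$'s expected payoff from $a_j$ is a function of the distribution of $(K,T)$, where $K$ (resp.\ $T$) counts the other players whose action beats (resp.\ ties) $\score_j$ --- a sum of $d-1$ independent three-valued indicators. The prefix-preserving property guarantees that rounding moves each opponent's ``beat'' probability and ``tie'' probability by $O(\delta)$, and Abel-summing the expected prize against the non-increasing, $[0,1]$-valued vector $(u_1,\dots,u_d)$ reduces the change in $i'$'s payoff to the change in the cumulative distribution of $(K,T)$ under these $O(\delta)$ perturbations. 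The step I expect to be the main obstacle is bounding \emph{this} change by $O(\epsilon)$: the trivial estimate gives only $O(d\delta)=O(d\epsilon/n)$, useless for large $d$, so one must show the aggregate count distribution is stable under a simultaneous $O(\delta)$ perturbation of all $d-1$ parameters \emph{with error independent of $d$}. I would establish this by using (a) that Observation~\ref{obs:roundprob} preserves zero probabilities, and (b) that in $\widehat{\G}$ every cost is an integer multiple of $\epsilon$, which rules out the near-degenerate equilibria (mixing with vanishingly small probabilities) that are the only source of $d$-scale instability; alternatively one can invoke the discretized-multinomial stability estimate of Daskalakis and Papadimitriou, at the cost of the conceptual simplicity sought here. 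Granting this, every deviation incentive is perturbed by less than $\epsilon$, so $\tilde x$ is an $\epsilon$-Nash equilibrium of $\widehat{\G}$ and hence a $2\epsilon$-Nash equilibrium of $\G$; being in the search space, it (or another $2\epsilon$-equilibrium found first) is returned.
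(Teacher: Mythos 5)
Your overall architecture coincides with the paper's: round costs to multiples of $\epsilon$ so that there are at most $(\ell+1)^n=O(1)$ types, enumerate on the grid $S$ via $s$-compositions per type (giving $O(d^{s|{\cal T}|})$ profiles), note that an $\epsilon$-Nash equilibrium of the cost-rounded game is a $2\epsilon$-Nash equilibrium of the original, and argue correctness by taking an exact Nash equilibrium of the cost-rounded game and applying the $\delta$-rounding of Observation~\ref{obs:roundprob} with $\delta=\epsilon/n$, the rounded profile being (up to permuting players of the same type) one of the profiles the algorithm checks. The running-time part is fine.

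The problem is the step you yourself flag as ``the crux'': you never prove that this $\delta$-rounded profile is an $\epsilon$-Nash equilibrium of the cost-rounded game, and the patches you sketch do not close the hole. Point (b) is unsound --- the costs being integer multiples of $\epsilon$ places no constraint on the equilibrium probabilities, which can be arbitrary reals, arbitrarily small, and off-grid for all $d-1$ opponents simultaneously, so ``near-degenerate'' mixing is not ruled out; point (a) only says supports do not grow, which is necessary for the well-supported condition but is not the quantitative estimate you need; and invoking the discretized-multinomial theorem of Daskalakis and Papadimitriou both abandons the self-contained argument the theorem is meant to give and would have to be reconciled with the specific resolution $\delta=\epsilon/n$ used by Algorithm~\ref{ptasRS}. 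For comparison, the paper disposes of this step with a short per-action accounting on the deviating player's side: in the rounded profile each probability a player uses differs by at most $\delta$ from the one he should use for his best response, so, payoffs being in $[0,1]$, he loses at most $\delta$ per action and his total regret is at most $n\delta=\epsilon$, with the cost rounding adding a further $\epsilon$; it does not carry out the $d$-independent aggregate-stability analysis over the $d-1$ rounded opponents that you identify as the obstacle. So your write-up diverges from the paper exactly at this point, and since the decisive claim is introduced with ``granting this'' rather than proved, the proposal as it stands is incomplete; to finish it you must either supply an argument in the paper's per-action style or actually prove the $O(\epsilon)$, $d$-independent bound on how the rounding of all opponents shifts each pure strategy's expected payoff --- your own estimate $O(d\delta)$ shows the trivial bound is insufficient, and if you believe no such bound holds for this grid, that is a substantive objection to be raised explicitly rather than absorbed into an assumption.
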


\begin{proof}
It is convenient to assume in the proof that $\epsilon$ (and hence $\delta$)
is the inverse of an integer. This can be assumed by rounding $\epsilon$ down
to the nearest such fraction $\epsilon'$, obtaining an $\epsilon'$-Nash equilibrium,
which is consequently an $\epsilon$-Nash equilibrium.

Consider first the case in which $\epsilon$ is the inverse of an integer.
Since costs lie in the range $[0,1]$ we can see that after
Step~\ref{stepRS:constantactions} is executed there are less than ${(\ell+1)}^n$ distinct player types.
We will find an $\epsilon$-Nash equilibrium of the rounded game,
which is a $2\epsilon$-Nash equilibrium of the original game since
each cost is only changed by at most an additive~$\epsilon$.
Note that $|{\cal T}|=O(1)$ as a function of $d$, the number of players.
Also, the number~$s$ of strategies in~$S$ constructed in Step~\ref{stepRS:all}
is $O(1)$ as a function of~$d$.
Since the number of $s$-compositions of $d_t$ (Step~\ref{stepRS:bruteforce}) is $O((d_t)^s)$,
the total number of mixed-strategy profiles considered in Step~\ref{stepRS:bruteforce} is
at most $O(\prod_{t\in{\cal T}} (d_t)^s)=O(d^{s|{\cal T}|})$ which is
polynomial in~$d$.

We now show that the algorithm always finds an $\epsilon$-Nash equilibrium in the last step.
First, consider the game with rounded cost vectors, as constructed
in Step~\ref{stepRS:constantactions}, and let ${\cal N}$ be a Nash equilibrium of this game.
Consider a $\delta$-rounding (as from Observation~\ref{obs:roundprob})
of each probability vector of ${\cal N}$ and notice
that such a probability vector is checked by the algorithm (or an equivalent
one is, in which the identities of players of the same type are swapped).
We next show that such a probability vector is an $\epsilon$-Nash equilibrium.
A player can be playing an action with a probability that differs by at most $\delta$
from the probability he should have used for his best response.
Thus for each of his actions, he can lose at most $\delta$ times the payoff
he gets for that action.
Since payoffs are upper-bounded by $1$, the maximum regret is
at most $n\delta=\epsilon$. The proof concludes by noting that the cost rounding
of Step \ref{stepRS:constantactions} implies an extra additive error of at most $\epsilon$.
\end{proof}

\subsubsection{FPTAS for constant number of players, many strategies}\label{sec:fptas}

Let $\G$ be a score-symmetric game with $d$ players having access to $n$ (shared) pure strategies.
In this section we view the number of players $d$ as a constant, and the number of
pure strategies $n$ is the parameter that governs the size of a game.
Let $\{1, \ldots, d\}$ denote the players.
We have $d$ prizes of values $u_1 \geq u_2 \geq \cdots \geq u_d$, where $u_1=1$ and $u_d=0$.

In this case, the expected payoff of player $i$ from playing $a_j$ is given
by the expected prize he gets minus his cost $c^i_j$ of $a_j$.
To define the expected prize that the player gets by playing $a_j$ we need some notation.
Let $B=\{-1, 0, 1\}$ and $v = (v_1, v_2, \ldots, v_d) \in B^d$ be a vector
that is defined with reference to some $a_j$ as follows.
$v_k$ gives information about the pure strategy played by player
$k\in\{1,\ldots,d\}$, where $v_k = 0$ means that player $k$ plays $a_j$,
$v_k=-1$ means that player $k$ plays one of the actions that are less competitive than $a_j$
(that is, one of $a_1,\ldots,a_{j-1}$), and finally,
$v_k=1$ means that $k$ plays one of the more competitive actions than $a_j$ (one of $a_{j+1},\ldots,a_n$).
Given vector $v \in B^d$ we let $v(1)$ be the number of $1$'s in $v$, i.e., $v(1) = |\{v_i: v_i=1\}|$;
similarly, $v(0)$ ($v(-1)$, respectively) denotes the number of $0$'s ($-1$'s, respectively) in $v$;
thus, $v(-1) + v(0) + v(1) = d$.

Now, let $x^{i_1}_j$ be the probability that player $i_1$ plays $a_j$,
and let $i_2, \ldots, i_d$ be the remaining players.
Observe that, if player $i_1$ plays purely $a_j$, i.e., $v_1 = 0$,
then prizes $u_1, u_2, \ldots, u_{v(1)}$ will be given to $v(1)$ players playing
more competitive actions than $a_j$, and prizes $u_{q+1}, u_{q+2}, \ldots, u_d$
are reserved for the $v(-1)$ players playing less competitive actions than $a_j$,
where $q = d - v(-1)$. Thus, the total value of prizes to be shared among the $v(0)$ players playing
$a_j$ (including player $i_1$) is $u_{v(1) + 1} + u_{v(1) + 2} + \cdots + u_q$.

We also denote by $\prob^{i}_j(-1) = \sum_{\ell=1}^{j-1} x^{i}_\ell$ and
$\prob^{i}_j(1) = \sum_{\ell=j+1}^n x^{i}_\ell$, the probabilities
of player $i$ playing actions that are less (respectively, more) competitive than $a_j$;
furthermore, let $\prob^{i}_j(0) = x^i_j$.
To cover degenerate cases we also assume that $\prob^{i}_1(-1) = 0$,
i.e., the probability of $i$ playing an action less competitive than $a_1$
(the weakest action) is zero; similarly, $\prob^{i}_n(1) = 0$.
Then the expected payoff of player $i_1$ for playing $a_j$, denoted as $\pi_j^{i_1}$, is given by
\begin{equation}\label{eq:payoff-2}
\pi_j^{i_1} = -c^{i_1}_j + \sum_{v = (v_1 = 0, v_2, \ldots, v_d) \in \{0\} \times B^{d-1}}
 \frac{u_{v(1) + 1} + u_{v(1) + 2} + \cdots + u_q}{v(0)} \cdot \left( \prod_{k=2}^d \prob^{i_k}_j(v_k) \right).
\end{equation}

To compute a Nash equilibrium, we need to find real values
$x^i_1,\ldots,x^i_n$, for $i \in \{1,2, \ldots, d\}$, that satisfy
\begin{equation}\label{eqn:probs-2}
x^i_j \geq 0~\forall i,j; ~~~~~~ \sum_j x^i_j = 1 ~~~~ i \in \{1,2, \ldots, d\}
\end{equation}
saying that for $i \in \{1,2, \ldots, d\}$, the values $\{x^i_j\}_j$ are a
probability distribution; for $i \in \{1,2, \ldots, d\}$ and $j>1$ the following should also hold:
\begin{equation}\label{eqn:pay-2}
\begin{array}{lll}
\hspace{-2pt} \pi^i_j > \max_{k=1,\ldots,j-1} \{\pi^i_k\} & \Longrightarrow & x^i_1=\ldots=x^i_{j-1}=0, \\
\hspace{-2pt}\pi^i_j < \max_{k=1,\ldots,j-1} \{\pi^i_k\} & \Longrightarrow & x^i_j=0.
\end{array}
\end{equation}

\begin{lemma}
The values $x^i_j$ satisfy~(\ref{eqn:probs-2}) and (\ref{eqn:pay-2}) if and only if they
are a Nash equilibrium.
\end{lemma}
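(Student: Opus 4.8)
The plan is to reduce the statement to the standard equivalent characterisation of Nash equilibrium in~(\ref{eq:NEconstraints}), via two essentially independent parts: (i) verify that the quantity $\pi^i_j$ defined in~(\ref{eq:payoff-2}) really equals player~$i$'s expected payoff $\sum_{s\in S_{-i}} u^i_{js} x_s$ for playing $a_j$ against the mixed profile $\{x^{i'}_k\}$; and (ii) show that, given~(i), the constraints~(\ref{eqn:probs-2})--(\ref{eqn:pay-2}) are precisely ``$\{x^i_j\}_j$ is a distribution for each $i$'' together with ``every strategy in player~$i$'s support is a best response''.

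For part~(i) the key point is score-symmetry: since all players use the same scores $\score_1 < \cdots < \score_n$, the ranking position of player~$i$ (playing $a_j$) and the size of the group tied with~$i$ depend on the pure profile $s\in S_{-i}$ only through the signature vector $v = (0,v_2,\ldots,v_d)\in\{0\}\times B^{d-1}$ recording, for each other player~$k$, whether $k$ plays something weaker than $a_j$ ($v_k=-1$), exactly $a_j$ ($v_k=0$), or something stronger ($v_k=1$). Given $v$, the $v(1)$ players above take the top prizes $u_1,\ldots,u_{v(1)}$, the $v(-1)$ players below take $u_{q+1},\ldots,u_d$ with $q=d-v(-1)$, and the remaining prizes $u_{v(1)+1},\ldots,u_q$ are pooled and split equally among the $v(0)\ge1$ tied players (so $i$ collects $(u_{v(1)+1}+\cdots+u_q)/v(0)$), after which $c^i_j$ is subtracted. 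Because the other players randomise independently, the total probability of profiles with signature $v$ factorises as $\prod_{k\ne i}\prob^{i_k}_j(v_k)$; summing the conditional payoff over all $v$ (and using $\sum_v\prod_{k\ne i}\prob^{i_k}_j(v_k)=1$ to pull out the $-c^i_j$ term) gives exactly~(\ref{eq:payoff-2}), the conventions $\prob^{i}_1(-1)=0$ and $\prob^{i}_n(1)=0$ correctly annihilating the impossible signatures at the endpoints $j=1$ and $j=n$.

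For part~(ii), note that~(\ref{eqn:probs-2}) is nothing but the requirement that each $\{x^i_j\}_j$ be a probability distribution, so it remains to check, for each player~$i$, that the two implications of~(\ref{eqn:pay-2}) (over $j>1$) are jointly equivalent to the best-response condition ``$x^i_{j'}>0 \Rightarrow \pi^i_{j'}=M_i$'', where $M_i:=\max_k\pi^i_k$. This is a purely combinatorial fact about a finite real sequence $\pi^i_1,\ldots,\pi^i_n$ and a non-negative vector $x^i$, proved by a running-maximum argument. In one direction, assume the best-response condition: if $\pi^i_j>\max_{k<j}\pi^i_k$ then each $k<j$ has $\pi^i_k<\pi^i_j\le M_i$, hence $x^i_k=0$, which is the first implication; and if $\pi^i_j<\max_{k<j}\pi^i_k\le M_i$ then $\pi^i_j<M_i$, hence $x^i_j=0$, the second. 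For the converse, assume~(\ref{eqn:pay-2}) and suppose $\pi^i_j<M_i$; if $\pi^i_j<\max_{k<j}\pi^i_k$ the second implication gives $x^i_j=0$ directly, and otherwise the global maximum is first attained at some index $k^\ast>j$, where $\pi^i_{k^\ast}=M_i>\max_{k<k^\ast}\pi^i_k$, so the first implication applied at $k^\ast$ forces $x^i_1=\cdots=x^i_{k^\ast-1}=0$, in particular $x^i_j=0$. Thus the best-response condition holds; combining this with part~(i) and~(\ref{eq:NEconstraints}) yields the lemma.

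The step I expect to be the main obstacle is part~(i): keeping the prize bookkeeping straight --- which group of players gets which block of prizes, the role of $q=d-v(-1)$, the equal sharing inside the tied block, and the degenerate cases $j\in\{1,n\}$ --- so that the sum-over-signatures expression matches~(\ref{eq:payoff-2}) on the nose. Part~(ii), by contrast, is a short elementary argument that uses nothing about the game beyond the fact that $\{1,\ldots,j-1\}$ indexes exactly the strategies weaker than $a_j$.
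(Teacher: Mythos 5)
Your proof is correct and takes essentially the same approach as the paper: your part~(i) simply re-derives the payoff formula~(\ref{eq:payoff-2}), which the paper establishes in the text immediately preceding the lemma rather than inside its proof, and your part~(ii) is the paper's argument---(\ref{eqn:probs-2}) is exactly the distribution requirement, and (\ref{eqn:pay-2}) is equivalent to the best-response characterisation~(\ref{eq:NEconstraints}). Your explicit running-maximum argument (taking the first index $k^\ast$ at which $\max_k \pi^i_k$ is attained) is a cleaner, more rigorous rendering of the paper's terse remark that $x^i_j$ gets zeroed out when $\pi^i_{j'}$ is compared with $\max_{k<j'}\{\pi^i_k\}$.
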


\begin{proof}
The sets $\{x^i_j\}_j$, for $i \in \{1,2, \ldots, d\}$, are constrained by~(\ref{eqn:probs-2}) to
be probability distributions.

We claim that expressions~(\ref{eqn:pay-2}) are equivalent to the definition of Nash
equilibrium constraints~(\ref{eq:NEconstraints}).
If $a_j$ give player $i$ a higher payoff that all previous (weaker) strategies
then none of those may be in player $i$'s support.
Similarly, when $a_j$ gives a lower payoff than a weaker strategy, $a_j$ will not be in the support.
Note that if $a_j$ gives a higher payoff than the weaker actions
but a lower payoff than a stronger strategy $a_{j'}$
then the probability of playing $a_j$ will be set to $0$ when $\pi^i_{j'}$ is compared
with $\max_{k=1,\ldots,j,\ldots,{j'}-1} \{\pi^i_k\}$.
\end{proof}

Consequently we have reduced the problem to satisfying the
constraints~(\ref{eqn:probs-2}) and (\ref{eqn:pay-2}).
We now define variables in addition to $x$'s and $\pi$'s with the aim of expressing
(\ref{eqn:probs-2}) and (\ref{eqn:pay-2}) in terms of a constant number of ``local'' variables.
This is needed to define our FPTAS. Let $\sigma^i_j$ be the partial sum $\sum_{\ell=1}^j x^i_\ell$.
We can now express (\ref{eqn:probs-2}) as follows:
\begin{equation}\label{eqn:probs2-2}
\begin{array}{c}
\sigma^i_1=x^i_1  ~~~~~~~~ 0 \leq \sigma^i_j \leq 1 ~~~~~~~~ \sigma^i_{j-1}+x^i_j=\sigma^i_j \\
\sigma^i_n = 1 ~~~~~~~~ 0\leq x^i_j \leq 1,
\end{array}
\end{equation}
and observing $\prob^i_j(-1) = \sigma^i_{j-1}$, $\prob^i_j(0) = x^i_j$ and $\prob^i_j(1) = 1 - \sigma^i_j$,
we can now express $\pi^{i_1}_j$ in (\ref{eq:payoff-2}) only in terms of variables
$\sigma^{i_2}_{j-1}, \ldots, \sigma^{i_d}_{j-1}$,
$x^{i_2}_j, \ldots, x^{i_d}_j$, $\sigma^{i_2}_{j}, \ldots, \sigma^{i_d}_{j}$ as
\begin{equation}\label{eq:payoff2-2}
\pi_j^{i_1} = -c^{i_1}_j + \sum_{v \in \{0\} \times B^{d-1}}
                            \frac{u_{v(1) + 1} + \cdots + u_q}{v(0)} \cdot
       \Psi^{i_1}_j(v, \sigma^{i_2}_{j-1}, \ldots, \sigma^{i_d}_{j-1}, x^{i_2}_j, \ldots, x^{i_d}_j, \sigma^{i_2}_{j}, \ldots, \sigma^{i_d}_{j}),
\end{equation}
where function $\Psi^{i_1}_j(\cdot)$ is the product $\prod_{k=2}^d \prob^{i_k}_j(v_k)$
written in terms of these variables.
Observe that if $j=1$, then $\sigma^{i_2}_{j-1} = \cdots = \sigma^{i_d}_{j-1} = 0$,
and if $j=n$, then $\sigma^{i_2}_{j} = \cdots = \sigma^{i_d}_{j} = 1$;
thus, in these cases function $\Psi^{i_1}_j(\cdot)$ does not depend on these respective
variables.

Additionally, let $\alpha^i_j$ be the maximum expected payoff player $i$
can get by playing one of $a_1,\ldots,a_j$, i.e.,
$\alpha^i_j=\max_{k=1,\ldots,j} \{\pi^i_k\}$. We can now define
\begin{equation}\label{eq:maxpay-2}
\alpha^i_1 = \pi^i_1 ~~~~~~~~~~~ \alpha^i_j=\max\{\alpha^i_{j-1}, \pi^i_j\}
\end{equation}
and express (\ref{eqn:pay-2}) as follows:
\begin{equation}\label{eqn:pay2-2}
\begin{array}{l}
\pi^i_j > \alpha^i_{j-1} ~ \Longrightarrow ~ \sigma^i_{j-1}=0, \\
\pi^i_j < \alpha^i_{j-1} ~ \Longrightarrow ~ x^i_j=0.
\end{array}
\end{equation}

\begin{observation}\label{obs:rewriteNE-2}
The values $x^i_j$, $\sigma^i_j$, $\alpha^i_j$ and $\pi^i_j$
satisfy~(\ref{eqn:probs2-2},\ref{eq:payoff2-2},\ref{eq:maxpay-2},\ref{eqn:pay2-2})
if and only if the values $x^i_j$ are a Nash equilibrium.
\end{observation}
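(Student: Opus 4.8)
The plan is to show that the system (\ref{eqn:probs2-2})--(\ref{eqn:pay2-2}) is merely a change of variables applied to the system (\ref{eqn:probs-2})--(\ref{eqn:pay-2}), whose equivalence with the Nash equilibrium condition was already established in the preceding Lemma. The auxiliary quantities $\sigma^i_j$ and $\alpha^i_j$ are fully determined by the $x^i_j$ and $\pi^i_j$, so the real content is simply that the two presentations cut out the same solution set in the $x$-coordinates; I would prove both directions of the ``if and only if'' simultaneously by exhibiting these forced identifications.

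First I would treat the probability constraints. Given $0\le x^i_j\le 1$, the recurrences $\sigma^i_1=x^i_1$ and $\sigma^i_{j-1}+x^i_j=\sigma^i_j$ force $\sigma^i_j=\sum_{\ell\le j}x^i_\ell$ by induction on $j$; then $\sigma^i_n=1$ is exactly $\sum_j x^i_j=1$, so (\ref{eqn:probs2-2}) holds if and only if (\ref{eqn:probs-2}) holds and the $\sigma^i_j$ are the partial sums. Under this identification, $\prob^i_j(-1)=\sigma^i_{j-1}$, $\prob^i_j(0)=x^i_j$, $\prob^i_j(1)=1-\sigma^i_j$, with the boundary conventions $\prob^i_1(-1)=0$, $\prob^i_n(1)=0$ matching $\sigma^i_0:=0$ and $\sigma^i_n=1$; substituting these into (\ref{eq:payoff-2}) gives precisely (\ref{eq:payoff2-2}), and conversely (\ref{eq:payoff2-2}) together with the partial-sum identification recovers (\ref{eq:payoff-2}). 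Likewise, (\ref{eq:maxpay-2}) is just the definition $\alpha^i_j=\max_{k\le j}\{\pi^i_k\}$ written recursively, so it holds if and only if each $\alpha^i_j$ equals that maximum.

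Next I would check that (\ref{eqn:pay2-2}) is equivalent to (\ref{eqn:pay-2}). With $\alpha^i_{j-1}=\max_{k=1,\ldots,j-1}\{\pi^i_k\}$, the second lines of the two displays coincide verbatim, and for the first lines the key point is that, since the $x^i_\ell$ are non-negative, $\sigma^i_{j-1}=0$ holds if and only if $x^i_1=\cdots=x^i_{j-1}=0$. Hence the first line of (\ref{eqn:pay2-2}) says exactly what the first line of (\ref{eqn:pay-2}) says. Combining the three identifications, the tuple $(x^i_j,\sigma^i_j,\alpha^i_j,\pi^i_j)$ satisfies (\ref{eqn:probs2-2})--(\ref{eqn:pay2-2}) if and only if the $x^i_j$ satisfy (\ref{eqn:probs-2})--(\ref{eqn:pay-2}) with the auxiliary variables taking their forced values, which by the preceding Lemma holds if and only if the $x^i_j$ are a Nash equilibrium.

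The argument involves no genuine difficulty; the only points I would be careful to state explicitly are the boundary cases $j=1$ and $j=n$ (where some $\sigma$-variables are pinned to $0$ or $1$ and the corresponding factors drop out of $\Psi^{i_1}_j$, consistently with the conventions on $\prob^i_j$), and the non-negativity argument that licenses replacing the vanishing of the whole block $x^i_1,\ldots,x^i_{j-1}$ by the single equation $\sigma^i_{j-1}=0$. These are precisely what make the collapsed, constant-arity ``local-variable'' form (\ref{eqn:pay2-2}) faithful to the original Nash conditions, and hence what the FPTAS in the next section relies on.
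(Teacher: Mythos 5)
Your argument is correct and matches what the paper intends: the Observation is stated without an explicit proof precisely because it follows from the preceding Lemma via the forced identifications $\sigma^i_j=\sum_{\ell\le j}x^i_\ell$ and $\alpha^i_j=\max_{k\le j}\{\pi^i_k\}$, together with the non-negativity argument that makes $\sigma^i_{j-1}=0$ equivalent to $x^i_1=\cdots=x^i_{j-1}=0$. You have simply spelled out, carefully and accurately (including the boundary cases $j=1$ and $j=n$), the change-of-variables reasoning the paper leaves implicit.
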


Now consider the sequence
\begin{equation*}\label{eqn:seq2}
{\cal S} = (\pi^1_j, \ldots, \pi^d_j, x^1_j, \ldots, x^d_j, \alpha^1_j, \ldots, \alpha^d_j, \sigma^1_j, \ldots, \sigma^d_j)_{j=1,\ldots,n}.
\end{equation*}
Constraints in (\ref{eqn:probs2-2}) involve $3$ variables that are at distance
at most $4d+1$ in $\cal S$ (namely, for $j>1$, $\sigma_{j-1}^i$ is followed by
$4d$ elements of ${\cal S}$ --including $x_j^i$-- and then by $\sigma^i_j$).
Constraints (\ref{eq:payoff2-2}) on the other hand involve variables that are
at distance at most $5d$ in $\cal S$. It is easy to check that the same happens
also for the other constraints and conclude then that the following holds.

\begin{observation}\label{obs:6-2}
For any $j=1,\ldots, n$, for each constraint in (\ref{eqn:probs2-2}),
(\ref{eq:payoff2-2}), (\ref{eq:maxpay-2}) and (\ref{eqn:pay2-2})
there are $5d$ consecutive elements of $\cal S$ that contains the quantities involved
in the constraint.
\end{observation}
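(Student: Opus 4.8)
The plan is to make the block structure of $\mathcal{S}$ explicit and then verify the claim by a short case analysis over the four families of constraints, handling a generic column index $j$ first and the two endpoints $j=1$, $j=n$ afterwards.

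First I would note that $\mathcal{S}$ is the concatenation $\mathcal{B}_1\mathcal{B}_2\cdots\mathcal{B}_n$ of $n$ consecutive blocks, each of length $4d$, where $\mathcal{B}_j$ lists, in this order, $\pi^1_j,\ldots,\pi^d_j$, then $x^1_j,\ldots,x^d_j$, then $\alpha^1_j,\ldots,\alpha^d_j$, then $\sigma^1_j,\ldots,\sigma^d_j$. Thus inside $\mathcal{B}_j$ the entry $\pi^i_j$ occupies local position $i$, $x^i_j$ position $d+i$, $\alpha^i_j$ position $2d+i$, $\sigma^i_j$ position $3d+i$, and an entry at local position $p$ of $\mathcal{B}_j$ has global position $4d(j-1)+p$ in $\mathcal{S}$. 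The key structural observation is that each constraint in (\ref{eqn:probs2-2}), (\ref{eq:payoff2-2}), (\ref{eq:maxpay-2}), (\ref{eqn:pay2-2}) mentions only variables whose column index lies in $\{j-1,j\}$ for a single $j$, and that the only index-$(j-1)$ variables occurring anywhere are the $\sigma^{\cdot}_{j-1}$'s (the last group of $\mathcal{B}_{j-1}$) and the $\alpha^{\cdot}_{j-1}$'s (the third group of $\mathcal{B}_{j-1}$). Hence every constraint is contained in $\mathcal{B}_{j-1}\cup\mathcal{B}_j$, and it suffices to bound the span of its extreme variables.

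Then I would treat the four families for a generic $j$ with $1<j<n$. In (\ref{eqn:probs2-2}) the only non-trivial constraint is $\sigma^i_{j-1}+x^i_j=\sigma^i_j$, whose extreme variables $\sigma^i_{j-1}$ and $\sigma^i_j$ have global positions differing by exactly $4d$ and between which $x^i_j$ falls, so a window of $4d+1$ consecutive elements suffices. In (\ref{eq:maxpay-2}) the constraint $\alpha^i_j=\max\{\alpha^i_{j-1},\pi^i_j\}$ is the same picture with $\alpha^i_{j-1},\alpha^i_j$ at distance $4d$ and $\pi^i_j$ in between, again $4d+1$. In (\ref{eqn:pay2-2}) the variables are $\alpha^i_{j-1}$, $\sigma^i_{j-1}$, $\pi^i_j$, $x^i_j$; the extremes $\alpha^i_{j-1}$ (global position $4d(j-2)+2d+i$) and $x^i_j$ (global position $4d(j-1)+d+i$) are $3d$ apart, so $3d+1$ suffices. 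The binding case is (\ref{eq:payoff2-2}): its variables are $\pi^{i_1}_j$ together with all of $\sigma^k_{j-1}$, $x^k_j$, $\sigma^k_j$ for $k\ne i_1$; the earliest is $\sigma^1_{j-1}$ at global position $4d(j-2)+3d+1$ and the latest is $\sigma^d_j$ at global position $4dj$, which span exactly $5d$ consecutive elements, and one checks directly that $\pi^{i_1}_j$ (global position $4d(j-1)+i_1$) lies strictly between them. Taking the maximum over the four families gives the claimed bound of $5d$.

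It then remains to handle the endpoints. For $j=1$ there are no variables $\sigma^{\cdot}_0$ or $\alpha^{\cdot}_0$ and, by the degeneracy conventions preceding (\ref{eq:payoff2-2}), the remaining variables of every constraint all lie within $\mathcal{B}_1$, a window of $4d\le 5d$; symmetrically, for $j=n$ the variables $\sigma^{\cdot}_n$ (fixed to $1$) drop out of (\ref{eq:payoff2-2}), which only shrinks its window. I do not expect any genuine difficulty here: the whole argument is elementary index bookkeeping, and the only point requiring care is getting the local positions right so that the worst family, (\ref{eq:payoff2-2}), comes out at exactly $5d$ (rather than, say, $6d$ if one carelessly bounds by ``all of $\mathcal{B}_{j-1}$ plus all of $\mathcal{B}_j$''), and confirming that the degenerate first and last columns never exceed this.
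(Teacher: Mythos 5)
Your proposal is correct and follows essentially the same route as the paper, which simply asserts the distance bounds (at most $4d+1$ positions for the constraints of~(\ref{eqn:probs2-2}), at most $5d$ for those of~(\ref{eq:payoff2-2}), and notes the remaining cases are analogous); you carry out the same index bookkeeping, just in fuller detail with the explicit block decomposition of $\cal S$ and the endpoint cases $j=1$, $j=n$. No gaps.
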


\noindent{\bf The algorithm.} For $\epsilon>0$ according
to (\ref{eq:epsilonNEconstraints}) we relax the constraints of~(\ref{eqn:pay2-2}) as follows:
\begin{equation}\label{eqn:pay3-2}
\begin{array}{rcl}
\pi^i_j > \alpha^i_{j-1} + \epsilon & \Longrightarrow & \sigma^i_{j-1}=0, \\
\pi^i_j < \alpha^i_{j-1} - \epsilon & \Longrightarrow & x^i_j=0.
\end{array}
\end{equation}

Let ${\cal S}_i$ be the sequence of $5d$ consecutive elements of $\cal S$ that begins
at the $i$-th element of $\cal S$. Let ${\cal E}_i$ be the set of expressions
in~(\ref{eqn:probs2-2}), (\ref{eq:payoff2-2}), (\ref{eq:maxpay-2}) and
(\ref{eqn:pay3-2}) that relate elements of ${\cal S}_i$ with
each other; by Observation~\ref{obs:6-2} the union of the sets ${\cal E}_i$ is
all constraints~(\ref{eqn:probs2-2}), (\ref{eq:payoff2-2}), (\ref{eq:maxpay-2}) and (\ref{eqn:pay3-2}).
The algorithm (Algorithm~\ref{fptas-2}) works its way through the
sequence $\cal S$ left-to-right, and for each ${\cal S}_i$ identifies a subset of
$([0,1])^{5d}$ representing possible values of those quantities that form part of an
approximate Nash equilibrium. We call this subset $D_i$.
Then it sweeps through the sequence right-to-left identifying allowable values
for previous elements. The parameter $\epsilon$ controls quality of approximation.

\LinesNotNumbered
\begin{algorithm}[!htb]\label{fptas-2}

\caption{FPTAS for score-symmetric $d$-player games (where $d$ is constant)
having a variable number, $n$, of pure strategies and accuracy parameter $\epsilon>0$}

\nl For each player, round each cost $c^i_j$ down to the nearest non-negative
integer multiple of $\delta$, where $\delta = \frac{\epsilon}{4d^2 \cdot 3^{d}}$.

\nl For $1\leq i\leq 4dn-5d+1$, let $D_i$ be the set of all $5d$-dimensional
vectors that are calculated as follows:

 \Indp\Indp {\it a.} Generate all non-negative integer multiples of $\delta$ for the $x$'s in ${\cal S}_i$.

 {\it b.} For each collection of such values for the $x$'s use the constraints from (\ref{eqn:probs2-2}) in ${\cal E}_i$ to calculate the corresponding values of the $\sigma$'s in ${\cal S}_i$.

 {\it c.} For each collection of values of $x$'s and $\sigma$'s use the constraints from (\ref{eq:payoff2-2}) in ${\cal E}_i$ to calculate values of the $\pi$'s in ${\cal S}_i$.

 {\it d.} Calculate the $\alpha$'s values by applying the constraints from (\ref{eq:maxpay-2}) in ${\cal E}_i$.

 {\it e.} Among all computed vectors computed above, keep in $D_i$ 
 only those that 
 fulfill the constraints from (\ref{eqn:pay3-2}) in ${\cal E}_i$.

\Indm \Indm \nl For $i>1$ (in ascending order) keep in $D_i$ only vectors $s$ for which there is at least one
vector $s'$ in $D_{i-1}$ such that the first $5d-1$ entries of $s$ are the same as the
last $5d-1$ entries of $s'$.

\nl Let ${\bf s}_{4dn-5d+1}$ be a point in $D_{4dn-5d+1}$. For $1 \leq i < 4dn-5d+1$ (in
descending order) let ${\bf s}_i$ be a point in $D_i$ chosen so that its
last $5d-1$ coordinates are the first $5d-1$ coordinates of ${\bf s}_{i+1}$.

\nl Let ${\bf s}$ be the vector of length $4dn$ such that ${\bf s}_i$ is the $i$-th
sequence of $5d$ consecutive coordinates of ${\bf s}$.
Set $x^i_j$ to the entry of ${\bf s}$ that corresponds to the position of
$x^i_j$ in $\cal S$.\label{fptas:return}
\end{algorithm}

\begin{theorem}\label{thm:fptas2}
There is a FPTAS for computing $\epsilon$-Nash equilibria of competitiveness-based
ranking games with a constant number of players.
\end{theorem}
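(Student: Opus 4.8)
\emph{The plan} is to combine the Nash\hyph preserving reduction to score-symmetric games from Section~\ref{sec:score-symm} with a correctness and running-time analysis of Algorithm~\ref{fptas-2}. Since the number of players $d$ is constant, that reduction blows up the number of strategies only by the factor $d$, so polynomial input size is preserved, and because it only adds \emph{weakly dominated} strategies, an approximate equilibrium of the score-symmetric game maps back to one of the original game by moving the probability placed on each weakly dominated strategy onto its dominator, exactly as in Section~\ref{sec:score-symm}. Hence it suffices to prove that Algorithm~\ref{fptas-2}, run on a score-symmetric $d$-player game with parameter $\epsilon>0$, takes time polynomial in $n$ and $1/\epsilon$ and outputs an $O(\epsilon)$-Nash equilibrium; re-running it with $\epsilon'=\Theta(\epsilon)$ then yields a genuine FPTAS, just as in the proof of Theorem~\ref{thm:ptasfixedstrat}. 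As there, we may also assume $\delta$ (hence $\epsilon$) is the inverse of an integer, so that Observation~\ref{obs:roundprob} applies.

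\emph{Running time and soundness.} The sequence ${\cal S}$ has length $4dn$, so there are $O(n)$ windows ${\cal S}_i$; each window spans at most two of the $n$ blocks of ${\cal S}$ and therefore contains at most $2d$ of the $x$-variables, so Step~2a enumerates at most $(1/\delta+1)^{2d}$ assignments, which is polynomial in $1/\epsilon$ because $d=O(1)$ and $1/\delta=4d^2 3^d/\epsilon$. For each assignment the derived $\sigma$'s, $\pi$'s (each a sum of $3^{d-1}=O(1)$ terms) and $\alpha$'s, and the check of~(\ref{eqn:pay3-2}), cost $O(1)$; the forward compatibility sweep (Step~3) and the backward reconstruction (Step~4) touch $O(n)$ windows with $D_i$ of polynomial size. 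For soundness, if Step~\ref{fptas:return} returns a vector $\mathbf s$, then Steps~3 and~4 force successive windows of $\mathbf s$ to agree on their overlaps, so the restriction of $\mathbf s$ to each ${\cal S}_i$ lies in $D_i$ and obeys every constraint of ${\cal E}_i$; by Observation~\ref{obs:6-2} these jointly comprise all of (\ref{eqn:probs2-2}), (\ref{eq:payoff2-2}), (\ref{eq:maxpay-2}) and (\ref{eqn:pay3-2}). By the natural $\epsilon$-relaxation of Observation~\ref{obs:rewriteNE-2} --- obtained by the same argument with~(\ref{eqn:pay3-2}) in place of~(\ref{eqn:pay2-2}) and~(\ref{eq:epsilonNEconstraints}) in place of~(\ref{eq:NEconstraints}) --- the extracted $x^i_j$ form an $\epsilon$-Nash equilibrium of the $\delta$-cost-rounded game, and hence an $(\epsilon+\delta)$-Nash equilibrium of the original score-symmetric game.

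\emph{Completeness.} A Nash equilibrium ${\cal N}$ of the $\delta$-cost-rounded game exists by Nash's theorem; replace each player's mixed strategy by its $\delta$-rounding from Observation~\ref{obs:roundprob}, obtaining grid values $\hat x,\hat\sigma,\hat\pi,\hat\alpha$ of the variables of ${\cal S}$. This grid profile induces a consistent family of window vectors, so Step~3 does not delete it, and the equalities (\ref{eqn:probs2-2}), (\ref{eq:payoff2-2}), (\ref{eq:maxpay-2}) hold by construction (using that $\delta$-rounding keeps $\sum_j\hat x^i_j=1$ exactly). It remains to check that $\hat x,\hat\sigma,\hat\pi,\hat\alpha$ satisfy the relaxed constraints~(\ref{eqn:pay3-2}), which follows from a perturbation estimate: by part~2 of Observation~\ref{obs:roundprob} each partial sum $\hat\sigma^i_m$ is within $\delta$ of $\sigma^i_m$, so each factor $\prob^{i_k}_j(v_k)\in\{\hat\sigma^{i_k}_{j-1},\hat x^{i_k}_j,1-\hat\sigma^{i_k}_j\}$ moves by at most $2\delta$, each product $\prod_{k=2}^d\prob^{i_k}_j(v_k)$ moves by at most $2(d-1)\delta$, and summing the $3^{d-1}$ such terms weighted by coefficients at most $u_1+\cdots+u_d\le d$ gives $|\hat\pi^i_j-\pi^i_j|\le\epsilon/4$ (here is where the calibrated $\delta=\epsilon/(4d^2 3^d)$ is used), whence also $|\hat\alpha^i_j-\alpha^i_j|\le\epsilon/4$ because the $\alpha$'s are maxima of $\pi$'s. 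Thus $\hat\pi^i_j>\hat\alpha^i_{j-1}+\epsilon$ implies $\pi^i_j>\alpha^i_{j-1}$, so ${\cal N}$ has $\sigma^i_{j-1}=0$, hence $\hat\sigma^i_{j-1}=0$ since $\delta$-rounding fixes values that are already multiples of $\delta$ (in particular zeros); the second implication of~(\ref{eqn:pay3-2}) is symmetric. So the grid profile passes Step~2e, Step~\ref{fptas:return} returns some vector, and by the soundness argument it is an $O(\epsilon)$-Nash equilibrium.

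\emph{Main obstacle.} I expect the perturbation estimate in the completeness step to be the crux: one must pin down that $\delta=\epsilon/(4d^2 3^d)$ really does make $|\hat\pi^i_j-\pi^i_j|$ and $|\hat\alpha^i_j-\alpha^i_j|$ small enough for a $\delta$-rounded exact equilibrium to survive the $\epsilon$-slack constraints~(\ref{eqn:pay3-2}), while simultaneously every equality relating $x,\sigma,\pi,\alpha$ stays \emph{exact} on the $\delta$-grid. The running-time bound, the bookkeeping that the windows' constraint sets ${\cal E}_i$ cover all the constraints, and the grid-consistency that makes the forward and backward sweeps harmless are all routine by comparison.
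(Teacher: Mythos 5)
Your overall route is the paper's own: reduce to a score-symmetric game (constant-factor blow-up in $n$ since $d$ is constant), then prove Algorithm~\ref{fptas-2} correct via two statements --- (i) any vector surviving to Step~\ref{fptas:return} yields an approximate equilibrium, and (ii) the $\delta$-rounding of an exact Nash equilibrium of the cost-rounded game survives all the pruning, using a perturbation bound calibrated to $\delta=\epsilon/(4d^2 3^d)$ --- plus a polynomial runtime count. Your completeness and runtime arguments match the paper's in substance.

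There is, however, a genuine flaw in your soundness step. You claim that satisfying (\ref{eqn:pay3-2}) (with (\ref{eqn:probs2-2}), (\ref{eq:payoff2-2}), (\ref{eq:maxpay-2}) exact) makes the extracted $x^i_j$ an $\epsilon$-Nash equilibrium of the cost-rounded game, ``by the same argument'' as in the exact-case lemma behind Observation~\ref{obs:rewriteNE-2}. That argument does not relativize to $\epsilon$-slack: the constraints (\ref{eqn:pay3-2}) only compare $\pi^i_j$ with $\alpha^i_{j-1}$, the maximum over the \emph{weaker} strategies, so the first implication can zero out $\sigma^i_{j-1}$ only when $\pi^i_j$ beats \emph{all} weaker payoffs by more than $\epsilon$, not merely the one it dominates by more than $\epsilon$. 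Slack therefore accumulates along the strategy order: if, say, $\pi^i_k \approx (k-1)\epsilon$ with consecutive gaps just under $\epsilon$, then (\ref{eqn:pay3-2}) imposes no restriction at all and $x^i_1=1$ is accepted, even though strategy $n$ beats strategy $1$ by nearly $(n-1)\epsilon$. So the output is only guaranteed to be roughly an $n\epsilon$-approximately well-supported equilibrium; indeed the paper's own approximation-guarantee proposition states $(n+2)\epsilon$, not $\epsilon+\delta$. This does not sink the theorem --- you recover the FPTAS by running the algorithm with accuracy parameter $\epsilon/(n+2)$ (still polynomial in $n$ and $1/\epsilon$) --- but your closing move ``re-run with $\epsilon'=\Theta(\epsilon)$'', which presumes an $O(\epsilon)$ guarantee with an absolute constant, is not justified as written and needs to be replaced by a rescaling of order $\epsilon/n$.
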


When the number of players is a constant, we may assume that the games is score-symmetric
by applying the reduction of Section~\ref{sec:score-symm}, which is then
achieved at the price of a constant-factor increase in the number $n$ of
strategies per player. We show that Algorithm~\ref{fptas-2} is indeed a FPTAS
for this class of games.
The proofs assume that $\epsilon$ is the inverse of an integer.
(Similarly to above, if this is not the case we simply run the algorithm with
an $\epsilon' < \epsilon$ which is inverse of an integer.)
Theorem~\ref{thm:fptas2} will follow from the following two propositions and
the subsequent observation about the runtime.

\begin{proposition}[Approximation guarantee]
If Algo\-rithm~\ref{fptas-2} finds a vector ${\bf s}$ in Step~\ref{fptas:return},
then the values $x^i_j$ correspond to a $(n+2)\epsilon$-Nash equilibrium.
\end{proposition}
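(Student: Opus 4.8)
The plan is to track how the various sources of error accumulate through the algorithm. There are three distinct contributions: (i) the cost-rounding in Step~1 changes every cost $c^i_j$ by at most $\delta$, which directly perturbs each payoff $\pi^i_j$ by at most $\delta$; (ii) restricting each $x$-variable to integer multiples of $\delta$ in Step~2a introduces a perturbation in the $\sigma$-variables and, through the polynomial expression~(\ref{eq:payoff2-2}), in the $\pi$-variables; and (iii) the relaxation of~(\ref{eqn:pay2-2}) to~(\ref{eqn:pay3-2}) loosens the equilibrium constraints by an additive $\epsilon$. I would show that, conditioned on Algorithm~\ref{fptas-2} reaching Step~\ref{fptas:return}, the returned $x^i_j$ satisfy the discretized constraints~(\ref{eqn:probs2-2}), (\ref{eq:payoff2-2}), (\ref{eq:maxpay-2}), (\ref{eqn:pay3-2}) exactly (since the $D_i$'s were built to enforce exactly those, and the left-to-right/right-to-left consistency sweeps guarantee the overlapping $5d$-windows agree), and that consequently the original payoffs are within $(n+2)\epsilon$ of satisfying the exact Nash conditions~(\ref{eq:NEconstraints}), i.e.\ the $x^i_j$ form an $(n+2)\epsilon$-Nash equilibrium.

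First I would make precise what ${\bf s}$ satisfies: by construction of $D_i$ in Step~2 and the gluing performed in Steps~3--4, the vector ${\bf s}$ has the property that for every $i$ the window ${\cal S}_i$ of $5d$ consecutive coordinates lies in $D_i$, hence (by Observation~\ref{obs:6-2}) every constraint in the union of the ${\cal E}_i$'s holds. Here the $\sigma$'s and $\pi$'s stored in ${\bf s}$ are the values \emph{computed from} the discretized $x$'s via~(\ref{eqn:probs2-2}) and~(\ref{eq:payoff2-2}), so~(\ref{eqn:probs2-2}) holds exactly (the $x$'s are non-negative multiples of $\delta$ summing to $1$, since $\epsilon$, and hence $\delta$, is an inverse of an integer, and $\sigma^i_n=1$ is enforced as a constraint in the last window). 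Thus the returned $x^i_j$ are a bona fide mixed-strategy profile.

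Next I would estimate the discrepancy between the stored $\pi^i_j$ and the \emph{true} expected payoff of the returned profile. The true payoff uses the true $\sigma$'s, which equal the stored ones exactly (they are partial sums of the actual $x$'s), so the only gap in~(\ref{eq:payoff2-2}) versus the true payoff is the cost-rounding: at most $\delta$ per payoff. The stored $\alpha^i_j$'s are then exactly the running maxima of the stored $\pi$'s by~(\ref{eq:maxpay-2}). Combining with the $\epsilon$-slack in~(\ref{eqn:pay3-2}): if the true $\pi^i_j$ exceeds $\alpha^i_{j-1}$ by more than $\epsilon+2\delta$, then the stored $\pi^i_j$ exceeds the stored $\alpha^i_{j-1}$ by more than $\epsilon$, forcing $\sigma^i_{j-1}=0$; symmetrically for the other direction. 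Since $\delta$ is chosen tiny ($\delta=\epsilon/(4d^2\cdot 3^d)$), one has $2\delta\le\epsilon$, and folding the cost-rounding error into the count gives that the true payoffs violate~(\ref{eqn:pay2-2}) by at most $2\epsilon$ — here I would be slightly careful about the bookkeeping of the $n$ strategies, since the ``max over weaker strategies'' form of~(\ref{eqn:pay2-2}) translates a single violation of~(\ref{eq:NEconstraints}) into a chain of comparisons, and the resulting bound on the regret in the original (unrelaxed, uncost-rounded) game comes out to $(n+2)\epsilon$ once one uses the Lemma preceding Observation~\ref{obs:rewriteNE-2} to pass back to~(\ref{eq:NEconstraints}). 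The factor of $n$ here is the same phenomenon as in the proof of Theorem~\ref{thm:ptasfixedstrat}: a per-action loss summed over up to $n$ actions.

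I expect the main obstacle to be the bookkeeping in the last step: carefully translating the ``max over all weaker strategies'' formulation of~(\ref{eqn:pay2-2})/(\ref{eqn:pay3-2}) back into the standard pairwise form~(\ref{eq:NEconstraints}) while keeping the additive error linear in $n$, and verifying that the discretization of the $x$-variables does not itself leak additional error into~(\ref{eq:payoff2-2}) beyond what the cost-rounding already contributes (it does not, because the windows store and reuse the \emph{exact} partial sums of the discretized $x$'s, so $\prob^i_j(\cdot)$ is represented exactly). The choice of $\delta$ is not actually needed for the \emph{approximation} bound — any $\delta$ with $2\delta\le\epsilon$ suffices there — it is dictated instead by the companion runtime/feasibility proposition, so in this proposition I would only need $\delta\le\epsilon$ and $\epsilon$ an inverse of an integer.
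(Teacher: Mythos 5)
Your proposal is correct in substance, but it follows a genuinely different error decomposition from the paper's proof --- and in one respect it is the more careful of the two. The paper charges $\epsilon$ to the relaxation (asserting that (\ref{eqn:pay3-2}) ``simply rewrites'' the well-supported condition (\ref{eq:epsilonNEconstraints})), charges less than $\epsilon$ to the cost rounding, and obtains the factor $n$ from the $\delta$-discretization of the probabilities via a ``the best response may be $\delta$ away'' argument. You instead observe, correctly, that the probability discretization leaks no error into this proposition at all, because the stored $\sigma$'s and $\pi$'s are computed exactly from the returned $x$'s (only the cost rounding, at most $\delta$ per payoff, separates them from the true payoffs), and you locate the factor $n$ where it genuinely arises: satisfying the max-form constraints (\ref{eqn:pay3-2}) with slack $\epsilon$ is only a necessary, not a sufficient, condition for being $\epsilon$-well-supported, and un-chaining the running maxima $\alpha^i_j$ back into pairwise comparisons lets the slack accumulate once per strategy, giving roughly $n\epsilon$. (A player whose payoffs are $0,\epsilon,2\epsilon,\ldots$ with all mass on the weakest action satisfies (\ref{eqn:pay3-2}) yet is only $(n-1)\epsilon$-well-supported, so the paper's ``$\epsilon$ for the relaxation'' is not literally right; the stated $(n+2)\epsilon$ bound survives because your chaining term absorbs it.) Two cautions on your bookkeeping. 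First, do the chaining entirely inside the rounded-cost game, where the stored $\pi$'s are the exact payoffs, conclude that the profile is an $n\epsilon$-well-supported equilibrium there, and only then transfer to the original game at a cost of $2\delta<\epsilon$ per pairwise comparison; your stated order (fold the cost error into the slack and then chain) yields $n(\epsilon+2\delta)$, which exceeds $(n+2)\epsilon$ once $n>4d^2 3^d$. Second, the factor $n$ is not ``the same phenomenon as in Theorem~\ref{thm:ptasfixedstrat}'': that $n\delta$ term concerns rounding a true equilibrium onto the grid (which is what matters for the companion correctness proposition), whereas here the $n$ comes from the chained comparisons --- exactly as you yourself argue when noting that the discretization contributes nothing.
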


\begin{proof}
The entries of ${\bf s}$ (including the subset corresponding to $x^i_j$)
satisfy (\ref{eqn:probs2-2}), (\ref{eq:payoff2-2}), (\ref{eq:maxpay-2})
and (\ref{eqn:pay3-2}), where~(\ref{eqn:pay3-2}) simply re\-writes the definition
of $\epsilon$-Nash equilibrium (\ref{eq:epsilonNEconstraints}) thus implying that
we are losing an additive $\epsilon$. Another additive loss smaller than $\epsilon$
is due to the cost rounding. Furthermore, we are restricting to probability
distributions whose values are non-negative integer multiples of $\delta$.
Thus, a player may be forced to play a strategy with a probability that differs by
at most $\delta$ from the probability of his best response.
This may impose an additional additive error of $n \delta < n \epsilon$ in the worst case
(this is because we have $n$ actions and on each of them the best response is
at most $\delta$ different, while the payoffs are upper bounded by $1$).
\end{proof}

\begin{proposition}[Correctness]
Algorithm \ref{fptas-2} always finds a vector ${\bf s}$ in Step~\ref{fptas:return}.
\end{proposition}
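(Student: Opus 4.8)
The approach I would take is to exhibit one global assignment of values to every quantity appearing in the sequence $\mathcal S$ that satisfies all the constraints the algorithm tests and whose probability entries are non-negative integer multiples of $\delta$, and then to show that this ``witness'' is never discarded by any of the pruning stages of Algorithm~\ref{fptas-2}. To build the witness, apply Nash's theorem to the \emph{cost-rounded} game (the game the algorithm actually works with after its first step) to obtain a Nash equilibrium $\{x^i_j\}$; by Observation~\ref{obs:rewriteNE-2} the associated values $x^i_j,\sigma^i_j,\alpha^i_j,\pi^i_j$ satisfy (\ref{eqn:probs2-2}), (\ref{eq:payoff2-2}), (\ref{eq:maxpay-2}) together with the tight payoff constraints (\ref{eqn:pay2-2}), hence \emph{a fortiori} the relaxed constraints (\ref{eqn:pay3-2}). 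Now round each player's probability vector to its $\delta$-rounding (Observation~\ref{obs:roundprob}), obtaining $\tilde x^i_1,\ldots,\tilde x^i_n$, and recompute $\tilde\sigma,\tilde\pi,\tilde\alpha$ from the $\tilde x$'s via the defining equations (\ref{eqn:probs2-2}), (\ref{eq:payoff2-2}), (\ref{eq:maxpay-2}). By construction the resulting assignment $\hat{\bf s}$ satisfies those three families of equations exactly, all of its probability coordinates and all of the partial sums $\tilde\sigma^i_j$ are multiples of $\delta$, and $\tilde\sigma^i_n=1$ because the rounded vector is still a distribution.

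The only constraints that still need to be verified for $\hat{\bf s}$ are the relaxed payoff constraints (\ref{eqn:pay3-2}), and this is the step carrying the real content. I would use two facts. First, $\delta$-rounding sends every zero coordinate to zero, so the support of $\tilde x^i_\cdot$ is contained in that of $x^i_\cdot$; in particular $\tilde\sigma^i_{j-1}>0$ forces $\sigma^i_{j-1}>0$, and $\tilde x^i_j>0$ forces $x^i_j>0$, whence the tight constraints (\ref{eqn:pay2-2}) of the original equilibrium give $\pi^i_j\le\alpha^i_{j-1}$, respectively $\pi^i_j\ge\alpha^i_{j-1}$. Second, by property~2 of Observation~\ref{obs:roundprob} each of $\sigma^i_{j-1}$, $x^i_j=\sigma^i_j-\sigma^i_{j-1}$ and $1-\sigma^i_j$ is changed by less than $2\delta$; since $\pi^i_j$ as written in (\ref{eq:payoff2-2}) is a sum of at most $3^{d-1}$ terms, each a coefficient that is an average of prizes (so at most $1$) times a product of $d-1$ factors from $[0,1]$, a telescoping estimate gives $|\tilde\pi^i_j-\pi^i_j|<3^{d-1}(d-1)\cdot 2\delta$, and $|\tilde\alpha^i_{j-1}-\alpha^i_{j-1}|$ satisfies the same bound because $\alpha$ is a coordinatewise maximum of the $\pi$'s. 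With $\delta=\epsilon/(4d^2\,3^{d})$ both quantities lie well below $\epsilon/2$, so $\tilde\sigma^i_{j-1}>0$ implies $\tilde\pi^i_j\le\pi^i_j+\tfrac\epsilon2\le\alpha^i_{j-1}+\tfrac\epsilon2\le\tilde\alpha^i_{j-1}+\epsilon$, and symmetrically $\tilde x^i_j>0$ implies $\tilde\pi^i_j\ge\tilde\alpha^i_{j-1}-\epsilon$; hence $\hat{\bf s}$ satisfies (\ref{eqn:pay3-2}), and therefore every constraint in every set $\mathcal E_i$.

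It then remains to run the standard sliding-window (path-decomposition) argument. For each $i$ let $\hat{\bf s}_i$ be the restriction of $\hat{\bf s}$ to the window $\mathcal S_i$: its probability coordinates, and the partial sums among its coordinates, are multiples of $\delta$ and so are among the values enumerated when $D_i$ is first built, the remaining coordinates coincide with the values the algorithm derives within that window from the equations of $\mathcal E_i$ (all of which $\hat{\bf s}$ satisfies), and $\hat{\bf s}_i$ passes the filter imposed by (\ref{eqn:pay3-2}) because $\hat{\bf s}$ does; thus $\hat{\bf s}_i\in D_i$ as first constructed. (Observation~\ref{obs:6-2} is what ensures that each of the constraints (\ref{eqn:probs2-2}), (\ref{eq:payoff2-2}), (\ref{eq:maxpay-2}), (\ref{eqn:pay3-2}) lies inside some window, so that satisfying all the $\mathcal E_i$ is equivalent to satisfying the whole system.) Since $\mathcal S_{i-1}$ and $\mathcal S_i$ overlap in $5d-1$ consecutive positions and $\hat{\bf s}_{i-1},\hat{\bf s}_i$ are restrictions of the same global vector, the last $5d-1$ entries of $\hat{\bf s}_{i-1}$ equal the first $5d-1$ entries of $\hat{\bf s}_i$, so an induction on $i$ shows $\hat{\bf s}_i$ survives the forward consistency pruning; in particular $D_{4dn-5d+1}$ is non-empty, giving the backward sweep a point to start from. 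Finally, the backward sweep cannot get stuck: once ${\bf s}_{i+1}$ is chosen in the fully pruned $D_{i+1}$, the condition enforced by the forward pruning provides a vector in the fully pruned $D_i$ whose last $5d-1$ entries agree with the first $5d-1$ entries of ${\bf s}_{i+1}$, and that vector is an admissible choice of ${\bf s}_i$; so Step~\ref{fptas:return} always outputs a vector ${\bf s}$. I expect the main obstacle to be the perturbation estimate of the second paragraph --- showing that $\delta$-rounding the probabilities moves each $\pi^i_j$ by less than $\epsilon/2$ while destroying no zero --- as everything else is bookkeeping about the window decomposition.
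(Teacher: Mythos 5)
Your proof is correct and takes essentially the same route as the paper's: it builds the witness by taking a Nash equilibrium of the cost-rounded game, $\delta$-rounding the probability vectors, recomputing $\sigma,\pi,\alpha$ from the defining equations, and then verifying the relaxed constraints (\ref{eqn:pay3-2}) via the fact that rounding preserves zero coordinates together with a perturbation bound of less than $\epsilon/2$ on each $\pi$ and $\alpha$. The only difference is presentational --- you argue the payoff constraints in contrapositive form and spell out the sliding-window and backward-sweep bookkeeping that the paper leaves implicit.
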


\begin{proof}
Consider a Nash equilibrium ${\cal N}$ and the associated vector ${\bf s}$.
Take a $\delta$-rounding $\tilde{x}$ ($\delta = \epsilon/(4d^2 \cdot 3^{d})$)
for each probability vector $x$ in ${\bf s}$.
Use (\ref{eqn:probs2-2}) to define the corresponding rounded vector $\tilde{\sigma}$.
Given $\tilde x$ and $\tilde \sigma$, use (\ref{eq:payoff2-2}) and (\ref{eq:maxpay-2})
to define values of $\tilde{\pi}$ and $\tilde{\alpha}$, respectively;
the newly-obtained vector of rounded values is denoted as $\tilde{\bf s}$.
Observe that $\tilde{\bf s}$ is considered by the algorithm.
It thus suffices to show that such a sequence satisfies all the constraints
that the algorithm imposes on the output.

By construction, (\ref{eqn:probs2-2}), (\ref{eq:payoff2-2}) and (\ref{eq:maxpay-2})
are fulfilled. For constraint (\ref{eqn:pay3-2}) we show that
$\tilde{\pi}^i_j > \tilde{\alpha}^i_{j-1} + \epsilon  \Rightarrow  \tilde{\sigma}^i_{j-1}=0$.
(Very similar arguments can be used to show the other condition of (\ref{eqn:pay3-2}).)

We will show first that $|y^i_j - \tilde{y}^i_j|<\epsilon/2$ for $y \in \{\pi,\alpha\}$.
We will only give the details for $y = \pi$ as the argument is similar in the other case.
Let us first focus on the product $\prod_{k=2}^d \prob^{i_k}_j(v_k)$ in (\ref{eq:payoff-2})
in which the factors $\sigma^{i_k}_{j-1}$, $1 - \sigma^{i_k}_{j}$ and $x^{i_k}_j$ are involved.
Now, the $\tilde{x}$'s are defined as $\delta$-roundings of the corresponding $x$'s
(according to Observation~\ref{obs:roundprob}).
The quantity $\tilde\sigma_j^i$ is the sum of the first $j$ entries
of $\tilde x^i$.
Thus, Observation~\ref{obs:roundprob} allows us to deduce that the rounded values
$\tilde{\sigma}^{i_k}_{j-1}$, $1 - \tilde{\sigma}^{i_k}_{j}$ and $\tilde{x}^{i_k}_j$
are less than $\delta$ away from $\sigma^{i_k}_{j-1}$, $1 - \sigma^{i_k}_{j}$ and
$x^{i_k}_j$ in absolute value, respectively. In other words,
$|\tilde{\prob}^{i_k}_j(v_k) - {\prob}^{i_k}_j(v_k)|< \delta$ for any $i_k$ and $j$.

Then we have
\[
\prod_{k=2}^d \tilde{\prob}^{i_k}_j(v_k)- \prod_{k=2}^d \prob^{i_k}_j(v_k)   <
\prod_{k=2}^d ( \prob^{i_k}_j(v_k)+\delta) - \prod_{k=2}^d \prob^{i_k}_j(v_k).
\]
%

An upper bound for the right-hand side is obtained by setting $\prob^{i_k}_j(v_k)=1$,
resulting in an upper bound of $(1+\delta)^{d-1}-1$. $\delta$ was chosen sufficiently small
to ensure that this is at most $\epsilon/(4\cdot 3^{d-1})$.
Now observe that the component
$\frac{u_{v(1) + 1} + u_{v(1) + 2} + \cdots + u_p}{v(0)}$ in (\ref{eq:payoff-2})
has value at most one and there are at most $3^{d-1}$ terms in the summation of
(\ref{eq:payoff-2}), which implies that the difference between the summations
for $\pi_j^{i_1}$ and $\tilde{\pi}_j^{i_1}$ is strictly smaller than $\epsilon/4$.
We have $|\pi^{i_1}_j - \tilde{\pi}^{i_1}_j|<\epsilon/2$.

Thus we have shown that $|y^i_j - \tilde{y}^i_j|<\epsilon/2$ for
$y \in \{\pi,\alpha\}$ and so $-\epsilon/2<y^i_j - \tilde{y}^i_j<\epsilon/2$
for $y \in \{\pi,\alpha\}$.
Therefore $\tilde{\pi}^i_j>\tilde{\alpha}^i_{j-1}+\epsilon$ implies that
${\pi}^i_j>{\alpha}^i_{j-1}$ and as $\bf s$ is a Nash equilibrium,
by Observation~\ref{obs:rewriteNE-2} and (\ref{eqn:pay2-2}), we have
$\sigma^i_{j-1}=0$. But then by the way we define $\tilde{\sigma}$'s
we have that $\tilde{\sigma}^i_{j-1}=0$.
\end{proof}

\begin{paragraph}{Runtime.}
Since the values of $\sigma$'s, $\pi$'s and $\alpha$'s
are calculated applying (\ref{eqn:probs2-2}), (\ref{eq:payoff2-2}) and
(\ref{eq:maxpay-2}) respectively for given $x$'s, the sets $D_i$ are of
size $O((1/\delta)^{d}) = O((1/\epsilon)^{d}) $, so for constant
$d$ the runtime of the algorithm is indeed polynomial in $n$ and $1/\epsilon$,
as required for a FPTAS.
\end{paragraph}

\begin{paragraph}{Remarks.}
Algorithm~\ref{fptas-2} is somewhat similar to the algorithm of~\cite{KLS} for
solving tree-structured graphical games. They give a similar
forward-and-backward dynamic programming approach to solving these
games; their algorithm takes exponential time for exact
equilibria~\cite{EGG} but a similar quantisation of real-valued
payoffs leads to a FPTAS.
\end{paragraph}

\section{Conclusions and further work}

Our FPTAS can be used to compute \emph{exact} equilibria in certain cases.
When a game with constantly-many players has payoffs that are multiples of
some $\epsilon>0$ then we can compute exact Nash equilibria in time polynomial
in the size of the input and $1/\epsilon$ by simply using the FPTAS.
This observation raises the open problem of determining whether there
is a polynomial-time algorithm  for solving 2-player
(competitiveness-based ranking) games in general when ties are possible
and the prize is shared in the event of a tie.

Several other concrete open problems have been raised by the current
results,  for example,  fully quantifying the complexity of computing
Nash equilibria for competitiveness-based ranking games.
Also, in situations where multiple equilibria may exist,
we would like to know whether a specific equilibrium is selected by
some natural decentralized dynamic process.

\bigskip \noindent {\bf Acknowledgements.} We thank David Gill,
Milan Vojnovik and Yoram Bachrach for pointers to related work.

\end{document}